%
\documentclass[runningheads]{llncs}

\usepackage{mathptmx,hyperref}   
\usepackage{helvet} 
\usepackage{courier}
\usepackage{amsmath,amsfonts}
\usepackage{graphicx}
\usepackage[bottom]{footmisc}          
    
\usepackage{mathptmx}         
\usepackage{helvet}            
\usepackage{courier}            
\usepackage{type1cm}             
\usepackage{makeidx}             
\usepackage{graphicx}          
\usepackage{multicol}          
\usepackage[bottom]{footmisc}  
     
\usepackage{epsfig}     
\usepackage{psfrag,rotating}     
\usepackage{amssymb,floatflt,enumerate} 
\usepackage{amsmath,amscd,psfrag,leqno}  
\usepackage[mathscr]{eucal}  
\usepackage{shadethm}           
\usepackage{graphicx,subfigure}   
\usepackage{overpic,contour}       
\contourlength{0.3mm}  
 
\def\Beweisende{\square}            
\def\BewEnde{\hfill{\Beweisende}}

\def\phm{{\hphantom{-}}}

\def\CC{{\mathbb C}}

\newcommand{\kreuz}{\!\times}

\newcommand{\go}[1]{{\sf #1}}

\DeclareMathOperator{\sign}{sign}

\begin{document}

\title{Generalizing continuous flexible Kokotsakis belts \\ of the isogonal type\thanks{Dedicated to my newborn son and his mother on the occasion of his birth.}}
%
%
\author{Georg Nawratil\inst{}\orcidID{0000-0001-8639-9064} 
}
\authorrunning{G. Nawratil}
%
\institute{Institute of Discrete Mathematics and Geometry \& \\ 
	Center for Geometry and Computational Design, \\TU Wien, Austria, 
	\email{nawratil@geometrie.tuwien.ac.at}\\
	\url{https://www.geometrie.tuwien.ac.at/nawratil/}
}
\maketitle              
\begin{abstract}
Kokotsakis studied the following problem in 1932: Given is a rigid closed polygonal line (planar or non-planar), which is 
surrounded by a polyhedral strip, where at each polygon vertex three faces meet. Determine the 
geometries of these closed strips with a continuous mobility. 
On the one side, we generalize this problem by allowing the faces, which are adjacent to polygon line-segments, to be skew; i.e to be non-planar. 
But on the other side, we restrict to the case where the four angles associated with each polygon vertex
fulfill the so-called isogonality condition that both pairs of opposite angles are equal or supplementary. 
In more detail, we study the case where the polygonal line is a skew quad, as this corresponds to a $(3\times 3)$ 
building block of a so-called V-hedra composed of skew quads. 
The latter also gives a positive answer to a question  posed by Robert Sauer in his book of 1970 whether continuous flexible skew quad surfaces exist. 

\keywords{Kokotsakis belt  \and continuous flexibility \and skew quad surfaces.}
\end{abstract}
\section{Introduction}\label{sec:intro}

Let us consider a so-called Kokotsakis belt as described in \cite{kokotsakis}, which is illustrated in Fig.\ \ref{fig1}a. 
In general these loop structures are rigid, thus continuous flexible ones possess a so-called 
overconstrained mobility. Kokotsakis himself formulated the problem for general rigid closed polygonal lines $\go p$, but in fact he 
only studied flexible belts with planar polygons $\go p$ in \cite{kokotsakis}. Planarity was only not assumed 
in the study of necessary and sufficient conditions for infinitesimal flexibility (see also Karpenkov \cite{karpenkov}). 
Clearly, the restriction to planar polygons $\go p$ makes sense in the context of continuous flexible polyhedra, as this condition has to be 
fulfilled around faces where all vertices have valence four\footnote{Assumed that this part of the continuous flexible polyhedra is not rigid.}. 

Our interest in Kokotsakis belts results from our research on continuous flexible polyhedral surfaces;  
especially those composed of rigid planar quads in the combinatorics of a square grid. 
A very well known class of these flexible planar-quad (PQ) surfaces are {\it V-hedra}, which are the discrete analogs of 
Voss surfaces\footnote{Surfaces on which geodesic lines form a conjugate curve network \cite{voss}.} according to \cite{schief}. 
They can easily be characterized by the fact that in each vertex the angles of opposite planar quads are equal.  
The question whether V-hedra can be generalized by dropping the planarity condition of the quads (cf.\ Fig.\ \ref{fig2}) motivated us for the study at hand, which is structured as follows:
We proceed in Section \ref{Review} with a literature review on flexible Kokotsakis belts, where we place emphasis on the so-called
isogonal type\footnote{This notation is in accordance with \cite{stachel}.}, which means that in every polygon vertex both pairs of opposite angles are (1) equal or (2) supplementary.
In Section \ref{sec:sphere} we discuss the spherical image of Kokotsakis belts from a kinematical point of view. 
Based on these consideration we study generalized flexible Kokotsakis belts of the isogonal type in Section \ref{sec:belt}. 
In Section \ref{sec:sqs} we discuss continuous flexible skew-quad (SQ) surfaces, where we focus on V-hedra composed of skew quads in more detail. 
The paper is concluded in Section \ref{sec:conclusion}.

\begin{figure}[t]
\begin{overpic}
    [height=45mm]{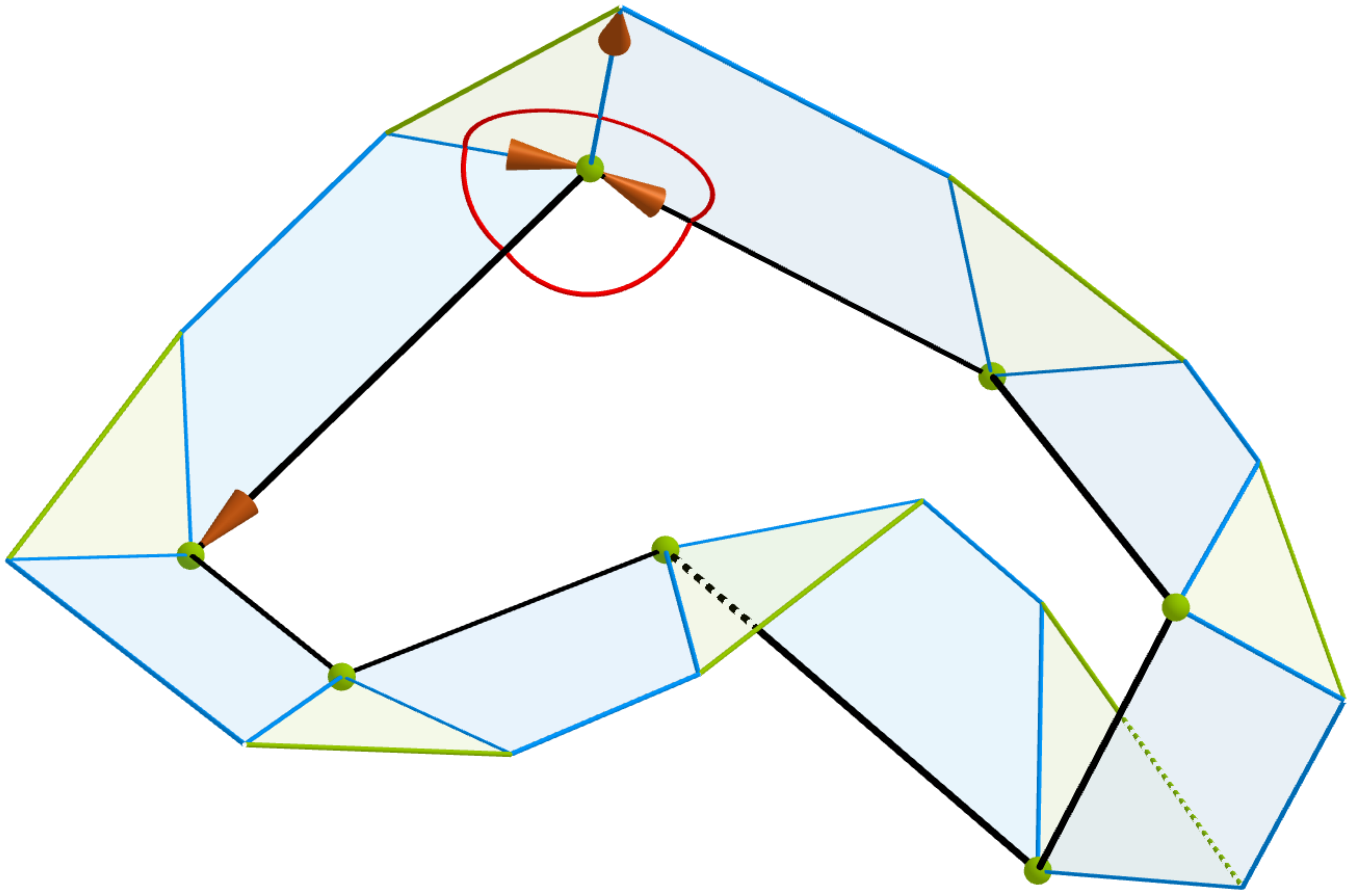}
\begin{scriptsize}
\put(0,0){a)}
\put(42.5,49.5){$V_i$}
\put(42.5,41){$\lambda_i^*$}
\put(48.5,57.5){$\gamma_i^*$}
\put(38.9,60){$\mu_i^*$}
\put(30.2,51){$\delta_i^*$}
\put(72.5,0){$V_0$}
\put(82,20){$V_1$}
\put(45,27.5){$V_{n-1}$}
\end{scriptsize}     
  \end{overpic} 
\hfill
\begin{overpic}
    [height=45mm]{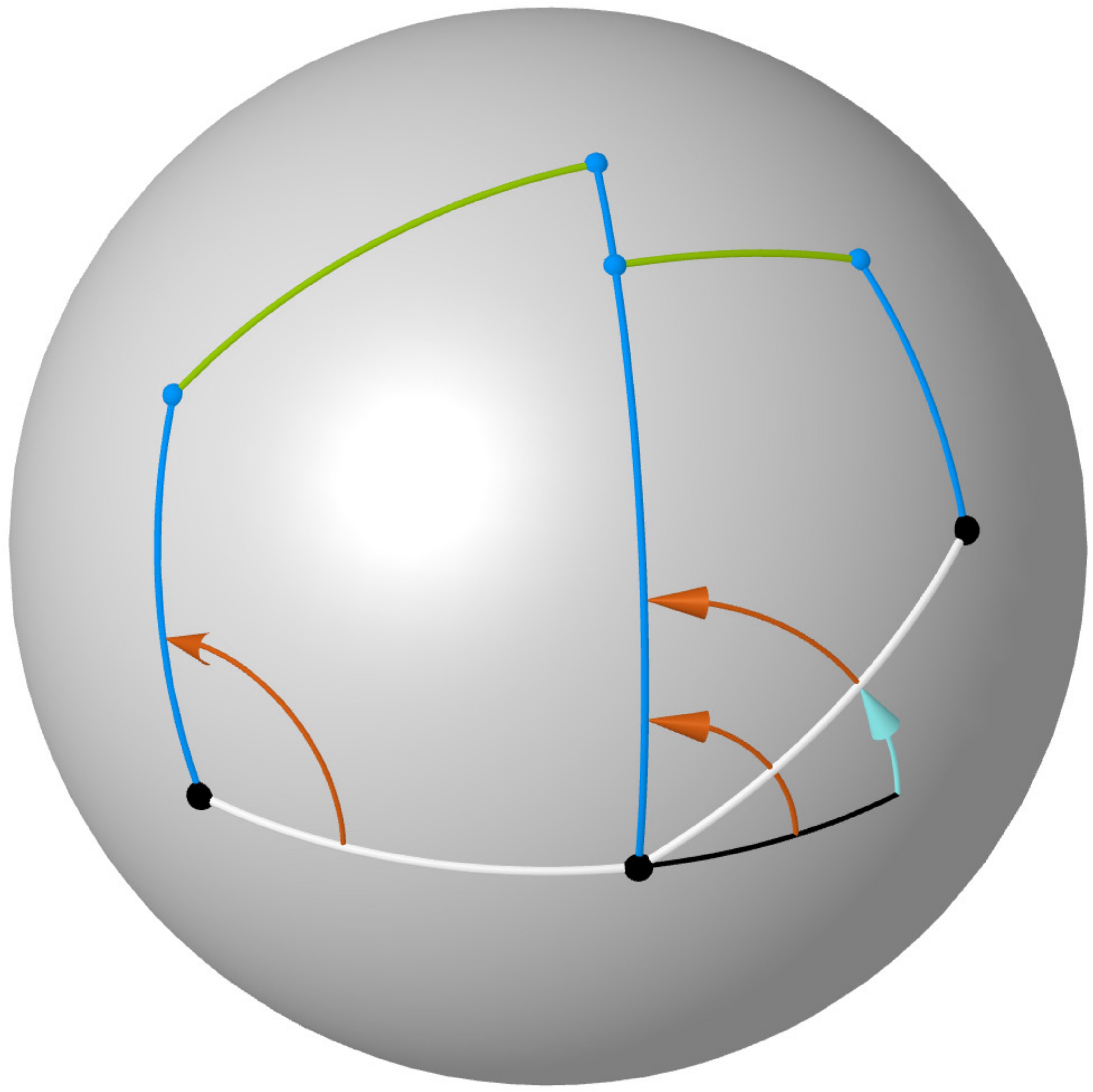}
\begin{scriptsize}
\put(0,0){b)}
\put(9,63){$A_i$}
\put(53,87){$B_i$}
\put(58.5,71.5){$A_{i+1}$}
\put(81,75){$B_{i+1}$}
\put(13,22){$C_{i}$}
\put(52,15){$C_{i+1}$}
\put(77,53){$C_{i+2}$}
\put(10,45){$\gamma_i$}
\put(34,16){$\lambda_i$}
\put(60,55){$\gamma_{i+1}$}
\put(53,55){$\delta_{i}$}
\put(85.5,64.5){$\delta_{i+1}$}
\put(29,79){$\mu_{i}$}
\put(64,79){$\mu_{i+1}$}
\put(82.5,30){$\tau_{i+1}$}
\put(20,41){$\alpha_{i}$}
\put(66,35){$\beta_{i}$}
\put(66,45.5){$\alpha_{i+1}$}
\put(84,39){$\lambda_{i+1}$}
\end{scriptsize}     
  \end{overpic} 
	
\caption{Original Kokotsakis belt: (a) A rigid closed polygonal line $\go p$ (not necessarily planar) with $n$ vertices $V_0,\ldots ,V_{n-1}$ is surrounded by a belt of planar polygons in a way that 
each vertex $V_i$ of $\go p$ has valence four. Moreover, the planar angles 
$\delta_i^*, \gamma_i^*,  \lambda_i^*, \mu_i^* \in(0;\pi)$ are illustrated as well as the orientation of the enclosing line-segments used for the construction of the 
spherical image, which is illustrated in parts in (b). Note that the edge $V_{i-1}V_i$ is mapped to the point $C_i$. 
}
  \label{fig1}
\end{figure}

\begin{figure}[t]
\begin{overpic}
    [height=45mm]{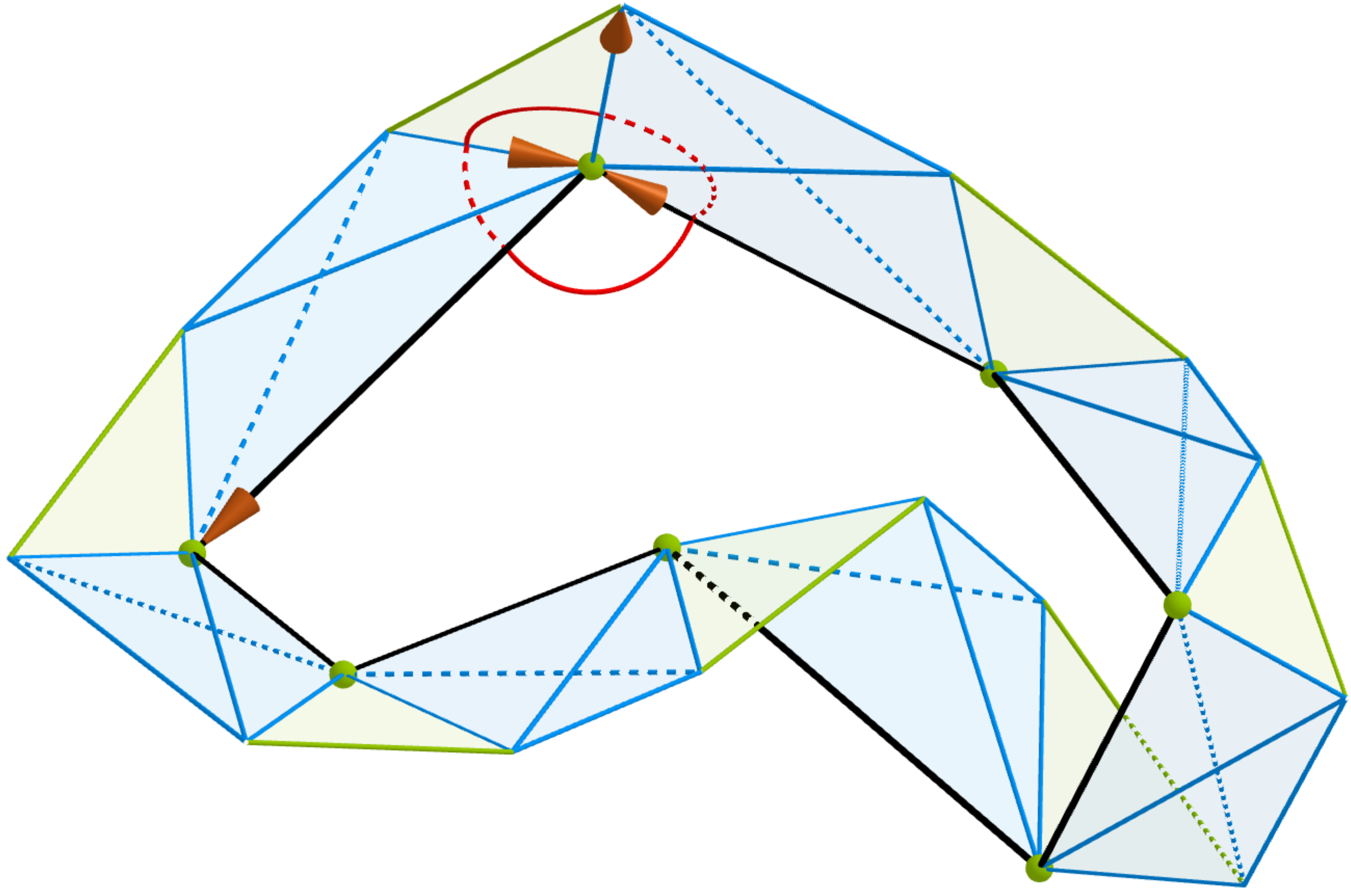}
\begin{scriptsize}
\put(0,0){a)}
\put(42.5,49.5){$V_i$}
\put(42.5,41){$\lambda_i^*$}
\put(48.5,57.5){$\gamma_i^*$}
\put(38.9,60){$\mu_i^*$}
\put(30.2,51){$\delta_i^*$}
\put(72.5,0){$V_0$}
\put(82,20){$V_1$}
\put(45,27.5){$V_{n-1}$}
\end{scriptsize}     
  \end{overpic} 
\hfill
\begin{overpic}
    [height=45mm]{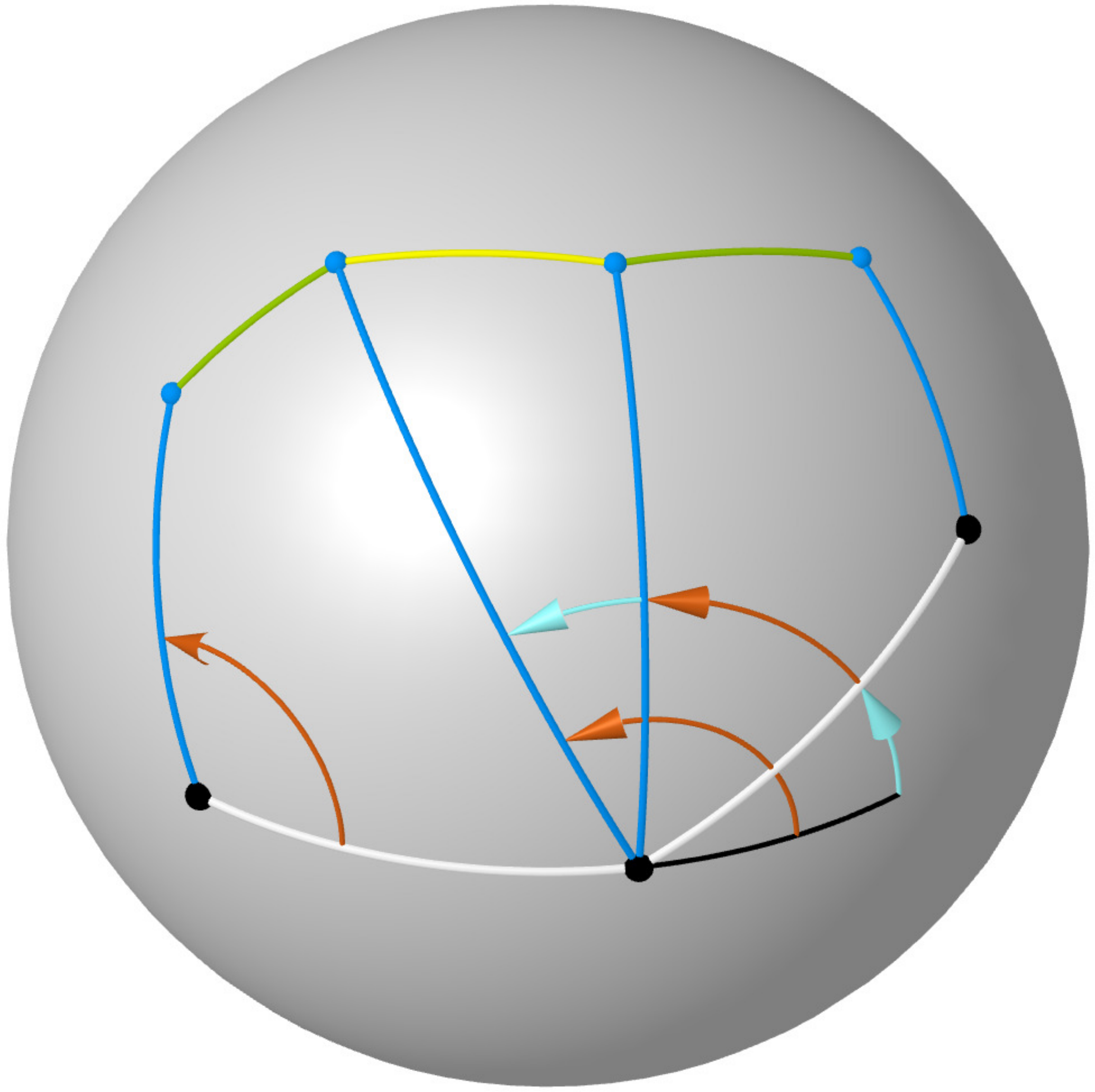}
\begin{scriptsize}
\put(0,0){b)}
\put(9,63){$A_i$}
\put(29,78){$B_i$}
\put(58.5,71.5){$A_{i+1}$}
\put(81,75){$B_{i+1}$}
\put(13,22){$C_{i}$}
\put(52,15){$C_{i+1}$}
\put(77,53){$C_{i+2}$}
\put(10,45){$\gamma_i$}
\put(34,16){$\lambda_i$}
\put(60,55){$\gamma_{i+1}$}
\put(35,50){$\delta_{i}$}
\put(85.5,64.5){$\delta_{i+1}$}
\put(18,71.5){$\mu_{i}$}
\put(64,79){$\mu_{i+1}$}
\put(82.5,30){$\tau_{i+1}$}
\put(20,41){$\alpha_{i}$}
\put(66,35){$\beta_{i}$}
\put(66,45.5){$\alpha_{i+1}$}
\put(84,39){$\lambda_{i+1}$}
\put(48,47){$\zeta_{i+1}$}
\end{scriptsize}     
  \end{overpic} 
	
\caption{Generalized Kokotsakis belt: 
(a) It is obtained from the original Kokotsakis belt illustrated in Fig.\ \ref{fig1} by dropping the planarity condition of the faces adjacent to the polygon line-segments of $\go p$. The resulting 
spatial structures are illustrated as tetrahedra. 
A part of the corresponding spherical image of the generalized Kokotsakis belt is visualized in (b).
}
  \label{fig2}
\end{figure}


\subsection{Review on continuous flexible Kokotsakis belts}\label{Review}

Until now only examples of continuous flexible Kokotsakis belts are known where the rigid polygon line $\go p$ is planar as well as
all faces adjacent to its line-segments. Therefore these assumptions hold for the complete review section, which is 
structured along the number $n$ of vertices $V_0,\ldots ,V_{n-1}$ of $\go p$ (cf.\ Fig.\ \ref{fig1}a). 

General results  were only obtained by Kokotsakis \cite{kokotsakis} for the isogonal type, to which for example 
rigidly foldable origami twists \cite{evans} belong as a special case.
For $n=3$ and $n=4$ more results are known, which can be summarized as follows:

\begin{enumerate}[$\bullet$]
\item
{\bf Case $n=3$:}
This case implies continuous flexible octahedra, which are very well studied objects dating back to Bricard \cite{bricard}. 
Especially, the Bricard octahedra of the 3rd type (cf.\ \cite{stachel_t3}) correspond to the isogonal case, which was already pointed out 
by Kokotsakis \cite{kokotsakis}. Moreover, the study of these Kokotsakis belts allows also the determination of 
continuous flexible octahedra with vertices at infinity \cite{naw_inf}. 
\item
{\bf Case $n=4$:} These Kokotsakis belts, which are also known as $(3\times 3)$ complexes, are the building blocks of continuous flexible PQ surfaces according to \cite[Theorem 3.2]{schief}. 
Based on spherical kinematic geometry \cite{stachel}, a partial classification of continuous flexible $(3\times 3)$ building blocks was obtained by Stachel and the author \cite{NS10,Naw11,Naw12}. 
Inspired by this approach, Izmestiev \cite{Izm17} obtained a full classification containing more than 20 cases. 

Note that the first classes of  continuous flexible PQ surfaces were given by Sauer and Graf \cite{graf}; 
namely the so-called T-hedra (see also \cite{sauer,kiumars}) and the already mentioned V-hedra 
(see also \cite{sauer,montagne}).

A rigid-foldable PQ surface which can be developed is a special case of origami. Under the additional condition of flat-foldability (as in case of the popular Miura-ori)
Tachi \cite{Tac09,Tac10} developed computational tools to design surfaces, where each vertex is of the isogonal type. 
Recently, Feng et al.\ \cite{feng} gave a complete analysis of the flat-foldable case, which can also be used for design tasks \cite{feng2022}. 

Within the field of computational design Jiang et al.\ \cite{jiang} presented recently an optimization technique to penalize  
an isometrically deformed surface with planar quads. Its design space is restricted to rigid-foldable quad-surfaces 
which can be seen as a discretization of flexible smooth surfaces (e.g.\ Voss surfaces, profile-affine surfaces \cite{graf,sauer}). 
\end{enumerate}


\section{Spherical image of Kokotsakis belts}\label{sec:sphere}

In order to get a consistent notation for the construction of the spherical image of the  Kokotsakis belt 
we orient the line-segments meeting at a vertex $V_i$ according to Figs.\ \ref{fig1}a and \ref{fig2}a, respectively. 
Taking this orientation of the line-segments into account, the spherical 4-bar mechanism, which corresponds 
with the arrangement of faces around $V_i$, has the following spherical bar lengths:
\begin{equation}
\delta_i=\pi-\delta_i^*, \quad \gamma_i=\pi-\gamma_i^*,  \quad \lambda_i=\pi-\lambda_i^*, \quad \mu_i=\pi-\mu_i^*
\end{equation}
for the index\footnote{Note that in the remainder of the paper the indices are taken modulo $n$.} $i=0,\ldots ,n-1$. 
The spherical image of faces around two adjacent vertices $V_i$ and $V_{i+1}$ is illustrated in Figs.\ \ref{fig1}b and \ref{fig2}b, 
which show the motion transmission from the vertex $C_{i}$ over $C_{i+1}$ to $C_{i+2}$ by two coupled spherical 4-bar mechanisms. 
Note that in the isogonal case these 4-bar mechanisms are so-called spherical isograms fulfilling one of the following two conditions:
\begin{equation}\label{eq:types}
(1)\phm \lambda_i=\mu_i, \phm \delta_i=\gamma_i, \qquad\qquad (2) \phm \lambda_i+\mu_i=\pi, \phm \delta_i+\gamma_i=\pi. 
\end{equation}
Note that these two types are related by the replacement of one of the vertices of the spherical isogram by its 
antipodal point, which does not change its motion. In Section \ref{sec:belt} we show that we can restrict to type (1) without loss of 
generality by assuming an appropriate choice of orientations.
In the following we use the half-angle substitutions
\begin{equation}
\sin{\alpha_i}=\tfrac{2a_i}{1+a_i^2}, \quad \cos{\alpha_i}=\tfrac{1-a_i^2}{1+a_i^2}, \quad
\sin{\beta_i}=\tfrac{2b_i}{1+b_i^2}, \quad \cos{\beta_i}=\tfrac{1-b_i^2}{1+b_i^2}, 
\end{equation}
in order to end up with algebraic expressions. 
It is well known (e.g.\ \cite{stachel}) that the input angle $\alpha_i$ and the output angle 
$\beta_i$ of the $i$-th spherical isogram of type (1) of Eq.\ (\ref{eq:types}) are related by
\begin{equation}\label{eq:b}
b_i=f_ia_i \quad \text{with} \quad f_i\neq 0 \quad \text{and} \quad
\quad f_i=\tfrac{\sin{\delta_i}\pm \sin{\lambda_i}}{\sin{(\delta_i-\lambda_i)}}.
\end{equation}
The two options in the expression for $f_i$ implied by the $\pm$ sign refer to the case whether the 
motion transmission corresponds to that of a  
spherical parallelogram ($\Leftrightarrow$ $f_i>0$) or
spherical antiparallelgram ($\Leftrightarrow$ $f_i<0$), respectively. 
Note that the degenerated cases ($\delta_i=\lambda_i$ and $\delta_i+\lambda_i=\pi$) of the spherical isogram are excluded by the condition $f_i\neq 0$ 
given in Eq.\ (\ref{eq:b}).

The angles $\beta_i$ and $\alpha_{i+1}$ are related over the offset angle $\varepsilon_{i+1}$; 
i.e.\ $\beta_i+\varepsilon_{i+1}=\alpha_{i+1}$. This means that $\varepsilon_{i+1}$ gives only the shift 
between the output angle $\beta_i$ of the $i$-th isogram to the input angle $\alpha_{i+1}$ of the 
$(i+1)$-th isogram. This yields the relation:
\begin{equation}
\tan{\alpha_{i+1}}=\tfrac{\tan{\beta_i}+\tan{\varepsilon_{i+1}}}{1-\tan{\beta_i}\tan{\varepsilon_{i+1}}}.
\end{equation} 
Using the half-angles and the Weierstrass substitution $e_{i+1}:=\tan{\tfrac{\varepsilon_{i+1}}{2}}$
yield
\begin{equation}\label{eq:a}
a_{i+1}=\tfrac{b_i+e_{i+1}}{1-b_ie_{i+1}}.
\end{equation}

Note that the spherical arcs $B_iC_{i,i+1}$ and $A_{i+1}C_{i,i+1}$ enclose the angle  
$\zeta_{i+1}:=\varepsilon_{i+1}+\tau_{i+1}$ (cf.\ Fig.\ \ref{fig2}b), where the latter angle is the torsion angle of the spatial 
polygon $\go p$, which is defined as the angle enclosed by the spherical arcs $C_{i}C_{i+1}$ 
and $C_{i+1}C_{i+2}$. 
From the polygon $\go p$ the angles $\tau_{i+1}$ can be computed as the angle of rotation about the oriented axis $V_iV_{i+1}$, which brings the 
plane $[V_{i-1},V_i,V_{i+1}]$ to the plane $[V_{i},V_{i+1},V_{i+2}]$. Therefore $\tau_{i+1}$, which is within the interval $(-\pi;\pi]$, 
can be computed as:
\begin{equation}
\tau_{i+1}=\sign{(o)}\arccos{\left(\tfrac{(c_i\kreuz c_{i+1})\,(c_{i+1}\kreuz c_{i+2})}{\|c_i\kreuz c_{i+1}\| \|c_{i+1}\kreuz c_{i+2}\|}\right)}\quad
\text{with}\quad o:=(c_i\kreuz c_{i+1})\,c_{i+2}
\end{equation}
where $c_i$ denotes the vector from $V_{i-1}$ to $V_i$.

\begin{remark}\label{rem1}
For the original Kokotsakis belt (cf.\ Fig.\ \ref{fig1}) the angle $\zeta_{i+1}$ is zero ($\Rightarrow$ $\varepsilon_{i+1}=-\tau_{i+1}$) 
or $\pi$ ($\Rightarrow$ $\varepsilon_{i+1}=\pi-\tau_{i+1}$) for all $i=0,\ldots ,n-1$.  
Note that $\go p$ is a planar curve if all $\tau_{i+1}$ are either zero or $\pi$. \hfill $\diamond$
\end{remark}


\section{Continuous flexible Kokotsakis belts of the isogonal type}\label{sec:belt}

According to \cite[Theorem 1]{stachel} the Kokotsakis belt is continuous flexible if and only if the spherical image has this property. 
Now we will show that any Kokotsakis belt of the isoganal type can be identified with 
a spherical mechanism, which is only composed of spherical isograms of type (1) in  Eq.\ (\ref{eq:types}):
 
We start with the spherical image of $\go p$, i.e.\ the points $C_0,\ldots, C_{n-1}$ 
and construct the spherical points $A_0$ and $B_0$ according to Section \ref{sec:sphere}. 
If the spherical isogram $C_0C_1B_0A_0$ is of type (2) then we replace $B_0$ by its antipode. 
Then we proceed as follows around the spherical image of the polyline $\go p$; i.e. for $i=0,\ldots,n-1$:
\begin{enumerate}[a.]
\item
In the case where the two antipodal points, which are candidates for $A_{i+1}$, correspond with the values zero and $\pi$ for $\varepsilon_{i+1}$, 
we have to choose the one which implies  $\varepsilon_{i+1}=0$ as $\varepsilon_{i+1}=\pi$ is not covered by Eq.\ (\ref{eq:a}). 
In any other case  $A_{i+1}$ can be chosen arbitrary from the corresponding set of two antipodal points.  
\item
$B_{i+1}$ has to be chosen from the corresponding set of two antipodal points such that the 
spherical isogram  $C_{i+1}C_{i+2}B_{i+1}A_{i+1}$ is of type (1).  
\end{enumerate} 
We can end up in two situations; either $A_n=A_0$ and we are done or $A_n$ is the antipodal point of $A_0$. 
In the latter case we denote by $j$ the highest possible index within the set $\left\{0,\ldots,n-1\right\}$ for which the  
choice of $A_{i+1}$ was done arbitrarily in step (a). Then we replace all $A_{i+1}$ and $B_{i+1}$ with $i\geq j$ by their antipodal points which yields $A_n=A_0$.  

Note that such a $j$ has to exist as otherwise we can construct the following contradiction: 
No $j$ exists if and only if there are no shifts; i.e.\  
$e_{0}=e_{1}= \ldots  =e_{n-1}=0$.  
As a consequence $\alpha_0=\beta_0=0$ implies $\alpha_{i+1}=\beta_{i+1}=0$ for all $i\in\left\{0,\ldots,n-1\right\}$, 
which already shows that in this case $A_n=A_0$ has to \bigskip hold.

As a consequence of the above considerations one can write down the condition for continuous flexibility 
of any Kokotsakis belt of the isogonal type, where the rigid polygon $\go p$ has $n>2$ vertices, as 
\begin{equation}
a_0-a_{n}=0.
\end{equation}
In this so-called closure condition we substitute $a_{n}$ by 
\begin{equation}
a_{i}=\tfrac{a_{i-1}f_{i-1}+e_{i}}{1-a_{i-1} f_{i-1} e_{i}} 
\end{equation}
which results from Eq.\ (\ref{eq:a}) under consideration of Eq.\ (\ref{eq:b}). 
By iterating this substitution (in total $n$ times) we end up with an expression of the form 
$q_2a_0^2 + q_1a_0 + q_0=0$
where $q_2,q_1,q_0$ are functions in $f_0,\ldots , f_{n-1}, e_{0}, \ldots , e_{n-1}$. 
This means that the spherical coupler arms $A_0C_0$ and $A_{n}C_{0}$ 
coincide for all input angles $\alpha_0$ if and only if the following necessary and sufficient conditions for 
continuous mobility are fulfilled:
\begin{equation}\label{eq:con}
q_2=0, \quad q_1=0, \quad q_0=0.
\end{equation}
This results in the following theorem:

\begin{theorem}
For a given polyhedral curve $\go p$ with $n$ vertices,  
there exists at least a $(2n-3)$-dimensional set of continuous flexible Kokotsakis belts of the isogonal type over $\CC$. 
\end{theorem}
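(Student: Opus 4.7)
My plan is to interpret the theorem as a standard codimension bound in affine space and apply Krull's Hauptidealsatz. Fix the polygonal curve $\go p$ once and for all; this determines the torsion angles $\tau_i$, leaving the $2n$ quantities $(f_0,\ldots,f_{n-1},e_0,\ldots,e_{n-1}) \in \CC^{2n}$ as the natural parameter space for Kokotsakis belts of the isogonal type built over $\go p$, as set up in Section~\ref{sec:belt}.

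The closure condition has already been reduced in the preceding paragraphs to the three polynomial equations $q_0 = q_1 = q_2 = 0$ of Eq.~(\ref{eq:con}). To verify that these really are polynomials in the $2n$ variables, I would observe that each elementary step $a_i\mapsto b_i = f_i a_i$ of Eq.~(\ref{eq:b}) and $b_i\mapsto a_{i+1}$ of Eq.~(\ref{eq:a}) is a M\"obius transformation whose $2\times 2$ transfer matrix has polynomial entries in $f_i$ and $e_{i+1}$. Composing $n$ such maps yields a M\"obius transformation $a_0\mapsto a_n$ whose coefficients are polynomials of controlled bidegree, so that clearing denominators in $a_0 - a_n \equiv 0$ indeed produces the quadratic $q_2 a_0^2 + q_1 a_0 + q_0$ with $q_j\in \CC[f_0,\ldots,f_{n-1},e_0,\ldots,e_{n-1}]$.

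Next, I would apply Krull's Hauptidealsatz to the algebraic set $V := \{q_0 = q_1 = q_2 = 0\}\subset \CC^{2n}$: every non-empty irreducible component of an algebraic set defined by three equations has codimension at most three, hence dimension at least $2n-3$. To confirm that $V\neq\emptyset$, I would point to the trivial locus $e_0 = \cdots = e_{n-1} = 0$: iterating Eq.~(\ref{eq:a}) then collapses to $a_n = a_0\prod_{i=0}^{n-1} f_i$, so the closure condition reduces to $\prod_i f_i = 1$, an $(n-1)$-dimensional locus inside $V$. Any irreducible component of $V$ passing through a point of this locus must then have dimension at least $2n-3$, which is the claim.

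The main obstacle is really only a bookkeeping one, namely checking that $q_0,q_1,q_2$ are genuine polynomials (and not rational functions whose numerators might vanish identically for trivial reasons) so that Krull's bound delivers codimension at most three; the M\"obius-composition argument above handles this directly. The non-degeneracy conditions $f_i\neq 0$ excluded in Eq.~(\ref{eq:b}) remove only a Zariski-closed proper subset of $\CC^{2n}$ and therefore do not affect the dimension count.
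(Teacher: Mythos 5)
Your proposal is correct and follows essentially the same route as the paper: the paper derives the three closure conditions $q_2=q_1=q_0=0$ on the $2n$ parameters $(f_0,\ldots,f_{n-1},e_0,\ldots,e_{n-1})\in\CC^{2n}$ and reads off the dimension $2n-3$ directly, while you simply make that count rigorous via Krull's Hauptidealsatz plus the non-emptiness witness $e_0=\cdots=e_{n-1}=0$, $\prod_i f_i=1$ --- the very locus the paper itself exhibits in the paragraph following the theorem. The only cosmetic caveat is that discarding $\{f_i=0\}$ is harmless not because it is a proper closed subset of $\CC^{2n}$ (it could a priori swallow a whole component of $V$), but because it meets the component through your witness locus properly, since all $f_i$ are nonzero there.
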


By taking a closer look at $q_2=0$ it can easily be seen that the terms linear in $e_i$ are given by
$f_0e_1$ and $f_0 \ldots f_{k-1}e_k$ for $k=2,\ldots, n$. 
In the equation  $q_0=0$ the linear terms in  $e_i$ are $e_0$, $e_{n-1}f_{n-1}$ and $e_kf_k\ldots f_{n-1}$ for $k=1,\ldots, n-2$.
Therefore each of the two conditions $q_2=0$ and $q_0=0$ can only be fulfilled 
independently from the choice of the $f_i$'s 
if there are no shifts; i.e.\ $e_{0}=e_{1}= \ldots  =e_{n-1}=0$.
Note that these are not only necessary conditions but already sufficient ones as they imply $q_2=q_0=0$. 
In this case the remaining condition $q_1=0$ simplifies to $f_0f_1\cdots f_{n-1}=1$
and we end up with a $(n-1)$-dimensional set of continuous flexible Kokotsakis belts of the isogonal type over $\CC$. 
Note that  
$e_{0}=e_{1}= \ldots  =e_{n-1}=0$ 
only implies planarity of $\go p$ if we assume the faces to be \bigskip planar.

For a spatial polyline $\go p$ with planar faces  the spherical coupler arms $B_iC_{i+1}$ and $A_{i+1}C_{i+1}$ are aligned. 
Therefore all $e_{i+1}$ are determined (cf.\ Remark \ref{rem1}) and we get:

\begin{theorem}
For a given polyhedral curve $\go p$ with $n>3$ vertices, there exists at least a
$(n-3)$-dimensional set of continuous flexible Kokotsakis belts with planar faces of the isogonal type over $\CC$. 
For planar curves $\go p$ (which is always true for $n=3$) this dimension raises to $(n-1)$. 
\end{theorem}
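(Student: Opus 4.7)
The plan is to reduce this theorem to the counting argument that justifies the previous one, by showing that the planar-face hypothesis fixes the offset parameters $e_i$ in terms of the torsion data of $\go p$. First, I would invoke Remark~\ref{rem1}: if the faces adjacent to the line-segments of $\go p$ are planar, then $\zeta_{i+1}\in\{0,\pi\}$, so $\varepsilon_{i+1}\in\{-\tau_{i+1},\,\pi-\tau_{i+1}\}$ is completely determined by $\go p$ (the binary ambiguity being resolved by step (a) of the construction in Section~\ref{sec:belt}, which forbids $\varepsilon_{i+1}=\pi$). Hence each $e_i=\tan(\varepsilon_i/2)$ is a fixed complex number depending only on the given polyline, and the parameter space shrinks from $\CC^{2n}$ to the $n$-dimensional affine space with coordinates $f_0,\ldots,f_{n-1}$.

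Next I would re-read the three closure polynomials $q_0,q_1,q_2$ constructed in the proof of the preceding theorem as polynomials in the $f_i$'s alone, after substituting the prescribed $e_i$'s. By Krull's principal ideal theorem, every component of their common vanishing locus in $\CC^n$ has dimension at least $n-3$. Non-emptiness can be extracted from the previous theorem: its $(2n-3)$-dimensional solution variety $V\subset\CC^{2n}$ projects onto the $e$-coordinate space, and the explicit linear-in-$e_i$ terms identified in the paper (namely $e_0,\ e_1 f_1\cdots f_{n-1},\ldots,e_{n-1}f_{n-1}$ in $q_0$ and the dual pattern in $q_2$) can be used to check that the projection is dominant, so the fiber over the prescribed $e$-values has dimension at least $(2n-3)-n=n-3$. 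This establishes the first assertion.

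For the planar-curve subcase every torsion angle satisfies $\tau_{i+1}\in\{0,\pi\}$, so step (a) of the construction lets one choose orientations uniformly with $\varepsilon_{i+1}=0$, giving $e_0=e_1=\cdots=e_{n-1}=0$. As pointed out immediately before the previous theorem, this collapses both $q_0$ and $q_2$ identically to zero, leaving only the single condition $q_1=0$, which simplifies to $f_0f_1\cdots f_{n-1}=1$. This hypersurface in $(\CC^\ast)^n$ has dimension $n-1$, proving the second assertion (and, for $n=3$, recovering the $2$-dimensional family of Bricard octahedra of the third type mentioned in the literature review).

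The main obstacle I expect is the non-emptiness step in the spatial case: the bare dimension inequality from Krull is vacuous if no solution exists, and it is not a priori obvious that every tuple of $e_i$'s arising from the torsions of a spatial polyline lies in the image of the projection $V\to\CC^n$. I would handle this by inspecting the above linear-in-$e_i$ expansions to verify dominance of the projection, or, failing that, by exhibiting a degenerate continuously flexible belt (for instance one whose spherical image collapses so that the coupler chain is rigidly closed for every $a_0$) as an explicit point of the fiber, which suffices to trigger the Krull bound.
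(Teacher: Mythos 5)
Your proposal follows essentially the same route as the paper: planarity of the faces forces the offset angles $\varepsilon_{i+1}$ (hence the $e_i$) to be determined by the torsions of $\go p$ via Remark~\ref{rem1}, so only the $n$ parameters $f_0,\ldots,f_{n-1}$ remain subject to the three closure conditions $q_2=q_1=q_0=0$, giving dimension $n-3$, while for planar $\go p$ all $e_i$ vanish, $q_2$ and $q_0$ vanish identically, and $q_1=0$ reduces to $f_0f_1\cdots f_{n-1}=1$, giving dimension $n-1$. The only difference is that you make the dimension bound precise via Krull's principal ideal theorem and explicitly flag the non-emptiness question, which the paper's terse parameter count leaves implicit.
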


\begin{figure}[t]
\begin{overpic}
    [height=45mm]{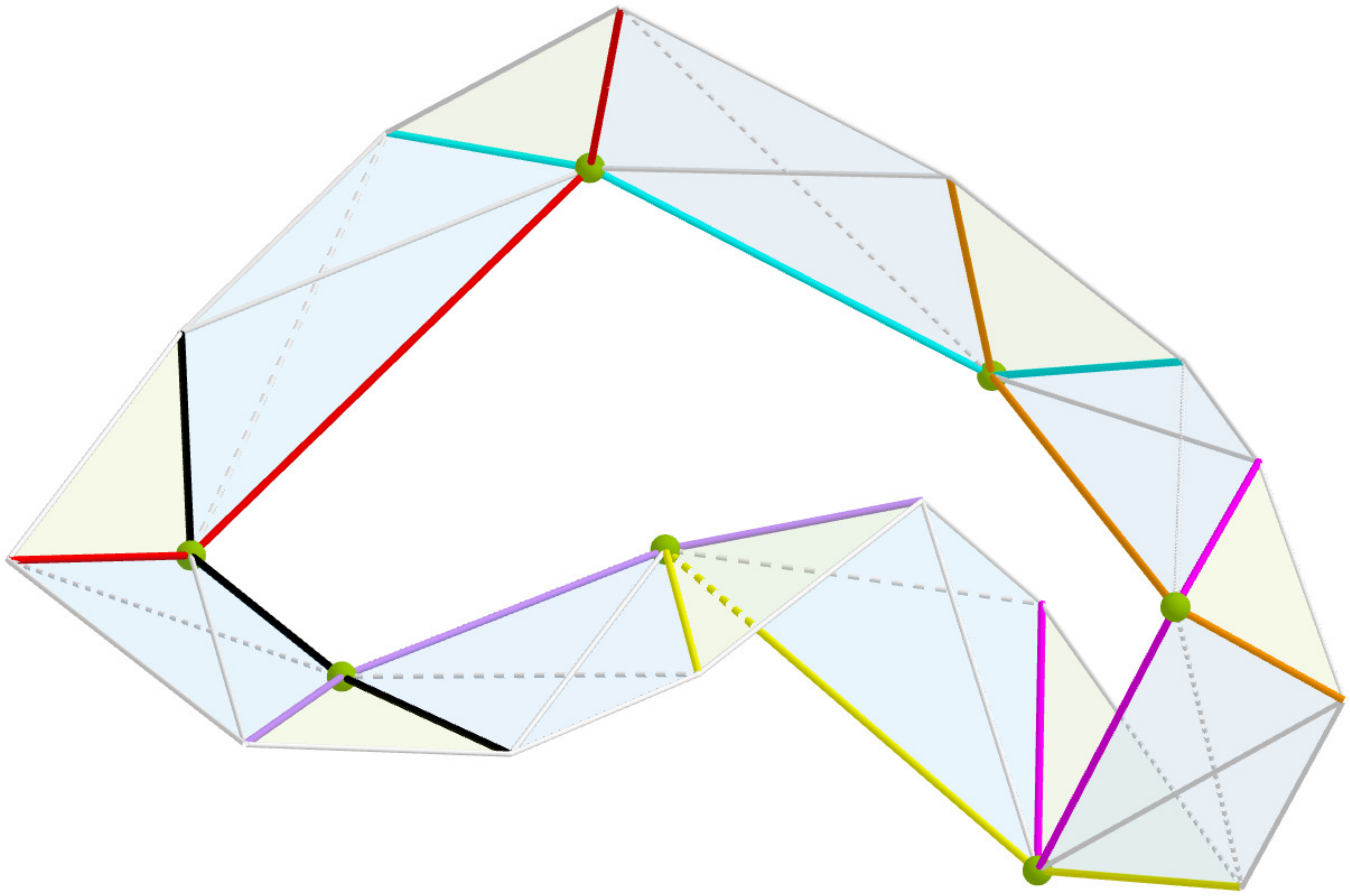}
\begin{scriptsize}
\put(0,0){a)}
\put(42.5,49.5){$V_i$}
\put(72.5,0){$V_0$}
\put(82,20){$V_1$}
\put(45,27.5){$V_{n-1}$}
\end{scriptsize}     
  \end{overpic} 
\hfill
\begin{overpic}
    [height=35mm]{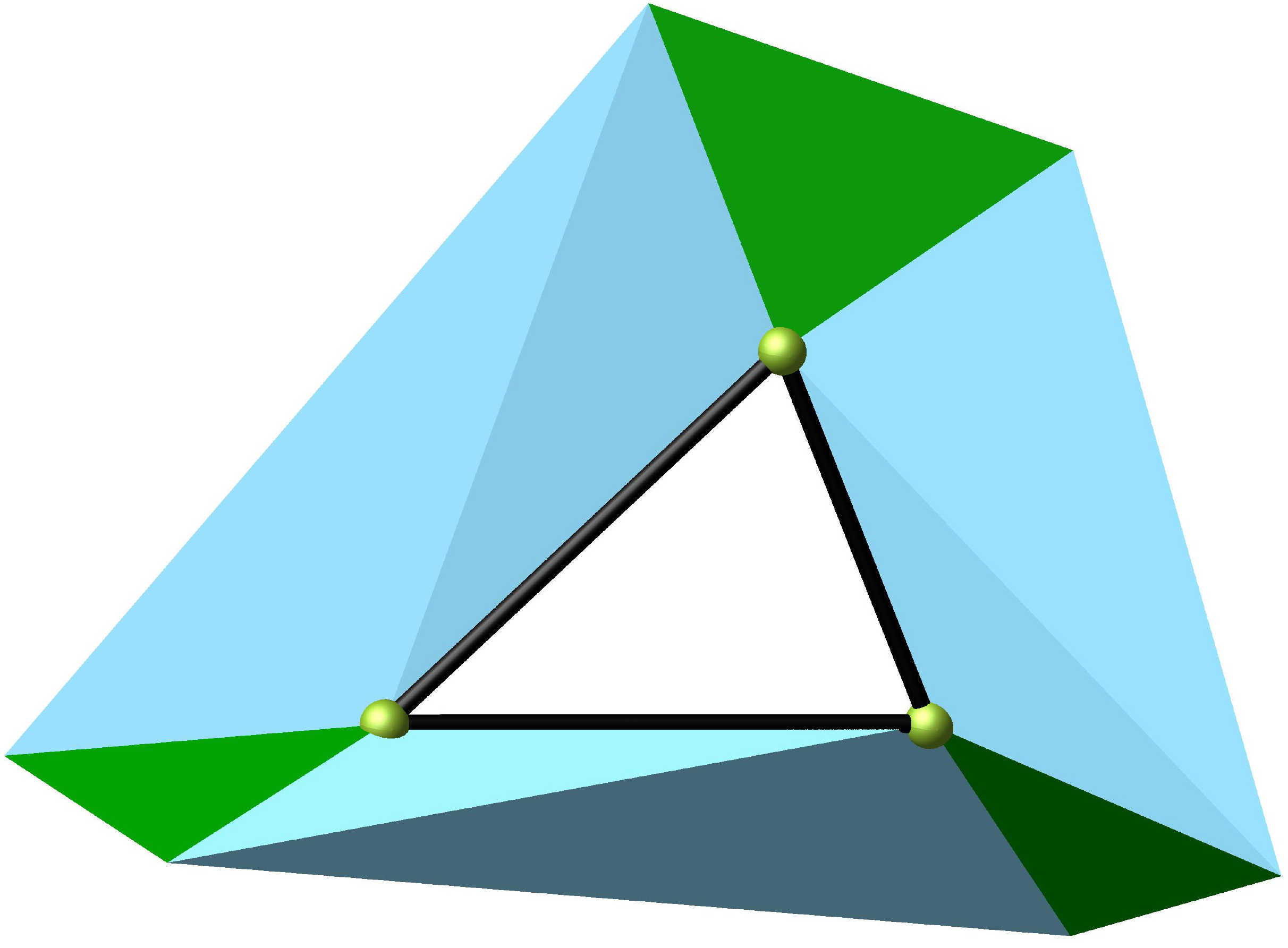}
\begin{scriptsize}
\put(0,0){b)}
\end{scriptsize}     
  \end{overpic} 
	
\caption{Generalized Kokotsakis belt of the isogonal type: 
(a) Edges with the same absolute values of their rotation angles during the  continuous flexibility are illustrated with the same color. 
(b) For $n=3$ we obtain an overconstrained angle-symmetric 6R linkage. 
}
  \label{fig3}
\end{figure}

\subsection{Property regarding the rotation angles}\label{sec:prop}

According to  \cite[\S 8]{kokotsakis} opposite angles in a spherical isogram are either equal or complete each other to $2\pi$. As a consequence 
opposite dihedral angles along edges meeting in a vertex $\go V_i$ have at each time instant $t$ the same absolute value of their angular velocities. 
Therefore the absolute values of the rotation angles around these two edges are the same (measured from an initial starting configuration). 
As one of the dihedral angels is the angle about an edge $\go V_{i}\go V_{i+1}$ of the polygon $\go p$, this property holds for the 
two spherical 4-bars, which have the common point $\go C_{i+1}$ (cf.\ Figs.\ \ref{fig1}b and \ref{fig2}b, respectively). 
Therefore the same absolute values of the rotation angle can always be assigned to 
three edges within a continuous flexible Kokotsakis belt of the isogonal type (cf.\ Fig.\ \ref{fig3}a).

\begin{example}
Now we consider the case $n=3$. For any choice of $\delta_i$ and $\gamma_i$ for $i=1,2,3$ and $\gamma_1+\gamma_2+\gamma_3=2\pi$ (closure condition of 
central triangle) there exist $e_0,e_1,e_2\in\CC$ such that we get a continuous flexible Kokotsakis belt of the isogonal type. 
The resulting structure can be seen as an overconstrained 6R loop (cf.\ Fig.\ \ref{fig3}b), which  belongs to the third class of so-called angle-symmetric 6R linkages \cite{zijia} due to the 
above discussed angle property. 
Note that for $e_0=e_1=e_2=0$ we get the already mentioned Bricard octahedron of type III (cf.\ case $n=3$ in Section \ref{Review}). 
\hfill $\diamond$
\end{example}


\section{Continuous flexible SQ surfaces}\label{sec:sqs}

On page 168 of Sauer's book \cite{sauer} the following open problem is mentioned: {\it Do there exist continuous flexible SQ surfaces?} 
We answer this question positively by constructing $(3\times 3)$ building blocks of  continuous flexible V-hedra  with skew quads within this section. 
The restriction to these substructures is sufficient as Theorem 3.2 of \cite{schief} can be generalized in the following way:

\begin{theorem}\label{th:schief}
A non-degenerate SQ surface is continuous flexible, if and only if this holds true for every $(3\times 3)$ building block.
\end{theorem}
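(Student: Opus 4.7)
The plan is to mimic the proof strategy of \cite[Theorem 3.2]{schief} in the SQ setting, using the spherical image framework developed in Section \ref{sec:sphere} as a replacement for its PQ analogue. The forward implication is immediate: if the whole non-degenerate SQ surface admits a one-parameter family of configurations, then the restriction to any $(3\times 3)$ building block inherits the same one-parameter family.

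For the non-trivial reverse direction, I would first translate flexibility of the SQ surface into flexibility of its spherical image by invoking \cite[Theorem 1]{stachel}, which equates spherical and spatial flexibility vertex by vertex and goes through unchanged in the generalized setting since the spherical image construction in Section \ref{sec:sphere} is purely local around each vertex. The spherical image of a non-degenerate SQ surface with the combinatorics of an $m\times\ell$ grid is a lattice of spherical 4-bars, one per interior vertex, coupled through shared spherical arcs via the offset angles $\zeta_{i+1} = \varepsilon_{i+1} + \tau_{i+1}$ that encode the spatial geometry of the underlying skew quads (cf.\ Fig.\ \ref{fig2}b). Next, I would fix a reference configuration and use the dihedral angle along one chosen interior edge as a global parameter: starting from this edge, the motion propagates along its row through the input--output relation of each spherical 4-bar (Eq.\ (\ref{eq:b})) together with the offset relation (Eq.\ (\ref{eq:a})), and analogously along its column. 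The $(3\times 3)$ hypothesis guarantees that propagation from one row of the grid to the next is consistent at every interior vertex, since the closure condition for the four spherical 4-bars meeting at such a vertex is precisely the $(3\times 3)$ flexibility condition.

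The main obstacle is to verify that local consistency at every interior vertex suffices for global consistency across the whole grid. This reduces to a combinatorial argument: the quad grid is simply connected, so all cycles in the associated dual graph decompose into elementary $2\times 2$ cycles around single interior vertices, and these are exactly the ones ruled out by the $(3\times 3)$ hypothesis. A secondary subtlety, specific to the SQ case, is that the torsion angles $\tau_{i+1}$ are themselves time-dependent (whereas in the PQ case they are pinned to $\{0,\pi\}$ by Remark \ref{rem1}), so the offsets $\zeta_{i+1}$ vary during the motion. However, since each skew quad is rigid, the variation of $\tau_{i+1}$ is entirely determined by the two dihedral angles along the edges of the quad belonging to the polygon $\go p$, and this dependence is already captured within each $(3\times 3)$ building block; hence it introduces no additional global condition beyond the $(3\times 3)$ hypothesis.
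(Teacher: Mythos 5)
Your proposal is correct and follows essentially the same route as the paper, whose entire proof is the one-line observation that the arguments of \cite[Theorem 3.2]{schief} nowhere rely on planarity of the quads; you simply spell out that propagation-plus-simple-connectivity argument together with the SQ-specific point that the time-dependent torsion angles are governed by the rigid skew quads and hence already accounted for inside each building block. One small wording correction: the elementary cycles that the $(3\times 3)$ hypothesis controls are those around interior \emph{faces} (the central quad of a building block, passing through the four spherical 4-bars at its corners), not around interior vertices, where the four surrounding quads form a single spherical 4-bar that is flexible unconditionally.
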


\begin{proof}
The arguments used for the proof of Theorem 3.2 of \cite{schief} do not rely on the planarity of the involved quads. \hfill $\BewEnde$
\end{proof}

\subsection{Associated overconstrained mechanism}\label{subsec:associate}

We start this section with the definition of reciprocal-parallel quad meshes:

\begin{definition}
Two quad meshes $\mathcal{Q}$ and $\mathcal{V}$ are called reciprocal-parallel if the following conditions are fulfilled: 
\begin{enumerate}[$\star$]
\item
$\mathcal{Q}$ and $\mathcal{V}$ are combinatorial dual; i.e.\ vertices of one mesh correspond to the faces of the other and vice versa.
\item
The edges of both meshes are related by the implied bijection that edges of adjacent face are mapped to edges between corresponding adjacent vertices and vice versa.
\item
Edges, which are related by this bijection, are parallel.
\end{enumerate}
\end{definition}

Sauer \cite{sauer} showed that every infinitesimal flexible quad surface $\mathcal{Q}$ possesses in general a unique (up to scaling) 
reciprocal-parallel quad mesh $\mathcal{V}$ with rigid vertices.  
The reciprocal-parallel surface of the latter mesh $\mathcal{V}$ is only uniquely determined (up to scaling) if $\mathcal{Q}$ is composed of skew quads, otherwise 
there exist infinitely many, which are in a parallelism relation\footnote{Corresponding faces and edges of 
these meshes are parallel.} to each other (cf.\ \cite[Theorem 16.22]{sauer}). 

The corresponding deformation of the $\mathcal{V}$ mesh during the continuous flexion of $\mathcal{Q}$ has to be a conformal transformation, as the 
vertex stars are rigid. The corresponding kinematic structure of $\mathcal{V}$ is composed of rigid vertex stars linked by cylindrical joints (cf.\  Fig.\ \ref{fig5}). Note that in general such a structure only has the trivial mobility resulting from the homothetic transformation. 
This motion can be omitted by fixing the length of one edge in the structure.
These modified linkages are in general rigid but those stemming from continuous flexible quad surfaces $\mathcal{Q}$ have 
an overconstrained motion.


\subsection{V-hedra with skew quads}\label{sec:vhedra}

We study this class in more detail, as it is a generalization of V-hedra with planar quads having many applications to structural engineering practice 
\cite{montagne,mitchell}. 

For $n=4$ the equations $q_2=q_1=q_0=0$ of Eq.\ (\ref{eq:con}) read as follows: 
\begin{equation}\label{eq:qs}
\begin{split}
q_2 := &f_0[e_0f_3(e_1e_2f_2 + e_2e_3f_1 + e_1e_3 - f_1f_2) + e_1e_2e_3f_2 - e_3f_1f_2 - e_2f_1 - e_1], \\
q_1 := &e_1e_2f_0f_2f_3 +e_2e_3f_0f_1f_3  + e_1e_3f_0f_3 - e_1e_3f_1f_2 - f_0f_1f_2f_3  - e_1e_2f_1 \\
&- e_2e_3f_2 + 1 +e_0(e_1e_2e_3f_1f_3 - e_1e_2e_3f_0f_2 - e_1f_1f_2f_3 + e_3f_0f_1f_2 \\
&+ e_2f_0f_1 - e_2f_2f_3 + e_1f_0 - e_3f_3), \\
q_0 :=&e_0(e_1e_3f_1f_2 + e_1e_2f_1 + e_2e_3f_2 -1)
+ f_3(e_1e_2e_3f_1  - e_1f_1f_2 - e_2f_2 - e_3).
\end{split}
\end{equation}
We can solve this set of equations explicitly for $e_1,e_2,e_3$ in dependence of $e_0,f_0,\ldots,f_3$, which yields the following two solutions: 
\begin{equation}\label{sol_es}
\begin{split}
e_1&=\tfrac{
e_0f_0f_2(f_1^2-1)(f_3^2-1)\pm R_1R_2
} 
{e_0^2(f_0 f_2 - f_1 f_3) (f_0-f_1 f_2 f_3 ) + (f_0 f_3 - f_1 f_2) (f_0 f_2 f_3 - f_1)}, \\
e_2&=\tfrac{
\mp R_1R_2
}
{e_0^2(f_0 f_1 - f_2 f_3) (f_0 f_2 - f_1 f_3) + (f_0 f_1 f_3 - f_2) (f_0 f_2 f_3 - f_1)
}, \\
e_3&=\tfrac{
e_0f_1f_3(f_0^2 - 1)(f_2^2 - 1)\pm R_1R_2}
{e_0^2(f_0 f_2 - f_1 f_3) (f_0 f_1 f_2 - f_3)+(f_0 f_3 - f_1 f_2) (f_0 f_1 f_3 - f_2)
}
\end{split}
\end{equation}
with
\begin{equation}
\begin{split}
R_1:&=\sqrt{e_0^2(f_0f_1 - f_2f_3)(f_0f_2 - f_1f_3) + (f_0f_1f_3 - f_2)(f_0f_2f_3 - f_1)}, \\
R_2:&=\sqrt{e_0^2(f_0f_1f_2 - f_3)(f_1f_2f_3 - f_0) + (f_0f_1f_2f_3 - 1)(f_1f_2 - f_0f_3)}.
\end{split}
\end{equation}

\begin{remark}
Alternatively, the above given equations 
$q_2=q_1=q_0=0$ from Eq.\ (\ref{eq:qs}) can also be solved explicitly for 
$f_1,f_2,f_3$ in dependence of $f_0,e_0,\ldots,e_3$, which yield the following two solutions:
\begin{equation}
\begin{split}
f_1&=\tfrac{-(f_0^2+1)e_0e_1(e_3^2 + 1)(e_2^2 - 1) 
+ f_0(e_0^2e_1^2e_2^2 - e_0^2e_1^2e_3^2 - e_0^2e_2^2e_3^2 - e_1^2e_2^2e_3^2 + e_0^2 + e_1^2 + e_2^2 - e_3^2)
\pm R_3R_4
} 
{2e_2(e_3^2 + 1)(e_0f_0 + e_1)(e_0e_1 - f_0)}, \\
f_2&=\tfrac{\phm (f_0^2+1)e_0e_1(e_3^2 + 1)(e_2^2 + 1)
-f_0(e_0^2e_1^2e_2^2 + e_0^2e_1^2e_3^2 - e_0^2e_2^2e_3^2 - e_1^2e_2^2e_3^2 - e_0^2 - e_1^2 + e_2^2 + e_3^2)
\mp R_3R_4}
{2e_2e_3f_0(e_1^2 + 1)(e_0^2 + 1)}, \\
f_3&=\tfrac{-(f_0^2+1)e_0e_1(e_3^2 - 1)(e_2^2 + 1)
-f_0(e_0^2e_1^2e_2^2 - e_0^2e_1^2e_3^2 + e_0^2e_2^2e_3^2 + e_1^2e_2^2e_3^2 - e_0^2 + e_1^2 + e_2^2 - e_3^2)
\pm R_3R_4}
{2e_3(e_2^2 + 1)(e_1f_0 + e_0)(e_0e_1 - f_0)}
\end{split}
\end{equation}
with
\begin{equation}
\begin{split}
R_{3,4}:=&[f_0(e_0^2e_1^2e_2^2 + e_0^2e_1^2e_3^2 - e_0^2e_2^2e_3^2 - e_1^2e_2^2e_3^2 - e_0^2 - e_1^2 + e_2^2 + e_3^2)\\
&-(f_0^2+1) e_0e_1(e_3^2 + 1)(e_2^2 + 1) \pm 2f_0e_2e_3(e_1^2 + 1)(e_0^2 + 1)]^{\tfrac{1}{2}},
\end{split}
\end{equation}
where $R_3$ corresponds to the plus sign and $R_4$ to the minus sign. \phm \hfill $\diamond$
\end{remark}

Note that for a given skew central quad $\go p$ and a set of real values $e_0,\ldots,e_3,f_0,\ldots ,f_3$
fulfilling the three equations $q_2=q_1=q_0=0$, the missing geometric 
parameters $\delta_i$ can be computed from the equation
$\sin{\delta_i}\pm \sin{\lambda_i}-f_i\sin{(\delta_i-\lambda_i)}=0$ (cf.\ Eq.\ (\ref{eq:b})).  
For the minus sign we get 
one further real solution beside the excluded degenerate case $\delta_i=\lambda_i$. 
By shifting these two values obtained for $\delta_i$ by $\pi$ we obtain the solutions of the 
equation with respect to the plus sign. Therefore we get a unique 
value for $\delta_i\in(0;\pi)$ with $\delta_i\neq \lambda_i$ for each $i=0,\ldots, 3$.

\begin{example}
The coordinates of the vertices of the skew central quad $\go p$ are given by:
\begin{equation}
V_0=(5,0,0)^T, \quad
V_1=(4,3,0)^T, \quad
V_2=(1,2,2)^T, \quad
V_3=(0,0,0)^T,
\end{equation}
from which the angles $\lambda_i$ and $\tau_i$ for $i=0,\ldots,3$ can be calculated. 
Moreover, the input data is completed by the values:
\begin{equation}
e_0=100, \quad d_0=0.3, \quad d_1=0.15, \quad d_2=0.2, \quad d_3=0.25,
\end{equation}
where $d_i=\tan{\tfrac{\delta_i}{2}}$.  
From that we can compute the $f_i$ values according to Eq.\ (\ref{eq:b}) with respect 
to the minus sign for all $i=0,\ldots,3$. 
Then the formulas for the solution set related to the upper sign in  Eq.\ (\ref{sol_es}) 
yield
\begin{equation}
e_1=-0.86081001, \quad
e_2=-5.06077939, \quad
e_3= 0.57043281.
\end{equation}
One configuration of the resulting continuous flexible $(3\times 3)$ building block of a 
V-hedra  with skew quads is illustrated in Fig.\ \ref{fig4}, where also the corresponding
spherical image is displayed. The associated overconstrained mechanism implied by the 
reciprocal-parallelism (cf.\ Section \ref{subsec:associate}) is shown in Fig.\ \ref{fig5}. 
In the captions of Figs.\ \ref{fig4} and \ref{fig5} we also provide links to {\tt gif} animations 
showing the overconstrained motion of these three mechanisms. 
\hfill $\diamond$
\end{example}


\noindent
We close this section by making the following two final comments:
\begin{enumerate}[$\bullet$]
\item
The edges of the V-hedra can be subdivided into two families of discrete parameter lines, which are 
called $u$-polylines and $v$-polylines for short.
Due to the property pointed out in Section \ref{sec:prop}, the rotation angles along any $u$-polyline or $v$-polyline 
are the same. Note that this property is well known for V-hedra with planar quads 
(cf.\ \cite[page 529]{graf}) but also holds for the skew case. 
\item
In view of Section \ref{subsec:associate} it should be noted that there is a
further remarkable relation to an overconstrained mechanism beside the one illustrated in Fig.\ \ref{fig5}. 
As already pointed out by Sauer \cite{sauer}
the vertex star fulfilling the isogonality condition is reciprocal-parallel to a skew isogram, which has the following additional property: 
If the four bars of the isogram are hinged in the vertices by rotational joints, which are orthogonal to the plane spanned by the linked bars (cf.\ Fig.\ \ref{fig6}), then 
one obtains a so-called Bennett mechanism \cite{bennett}.  
This is the only non-trivial mobile 4R loop. 

If the quads of the V-hedra $\mathcal{Q}$ are skew then the four axes of the Bennett mechanisms, which can be associated with a vertex of the mesh $\mathcal{V}$, differ from each other (cf.\ Fig.\ \ref{fig6}). 
Only in the case where $\mathcal{Q}$ is a V-hedra with 
planar quads, each vertex of $\mathcal{V}$ can uniquely be associated with one rotational axis orthogonal to 
the planar vertex star. Then the resulting network of Bennett mechanisms is highly mobile\footnote{The degree of the mobility corresponds to the number of rows plus columns of $\mathcal{V}$ minus one 
(cf.\ Sauer \cite[Theorem 11.18]{sauer}).}. Finally it should be noted that in this case $\mathcal{V}$ is a discrete pseudospherical surface \cite{schief,sauer,wunderlich}.  
\end{enumerate}

\begin{figure}[t]
\begin{overpic}
    [height=65mm]{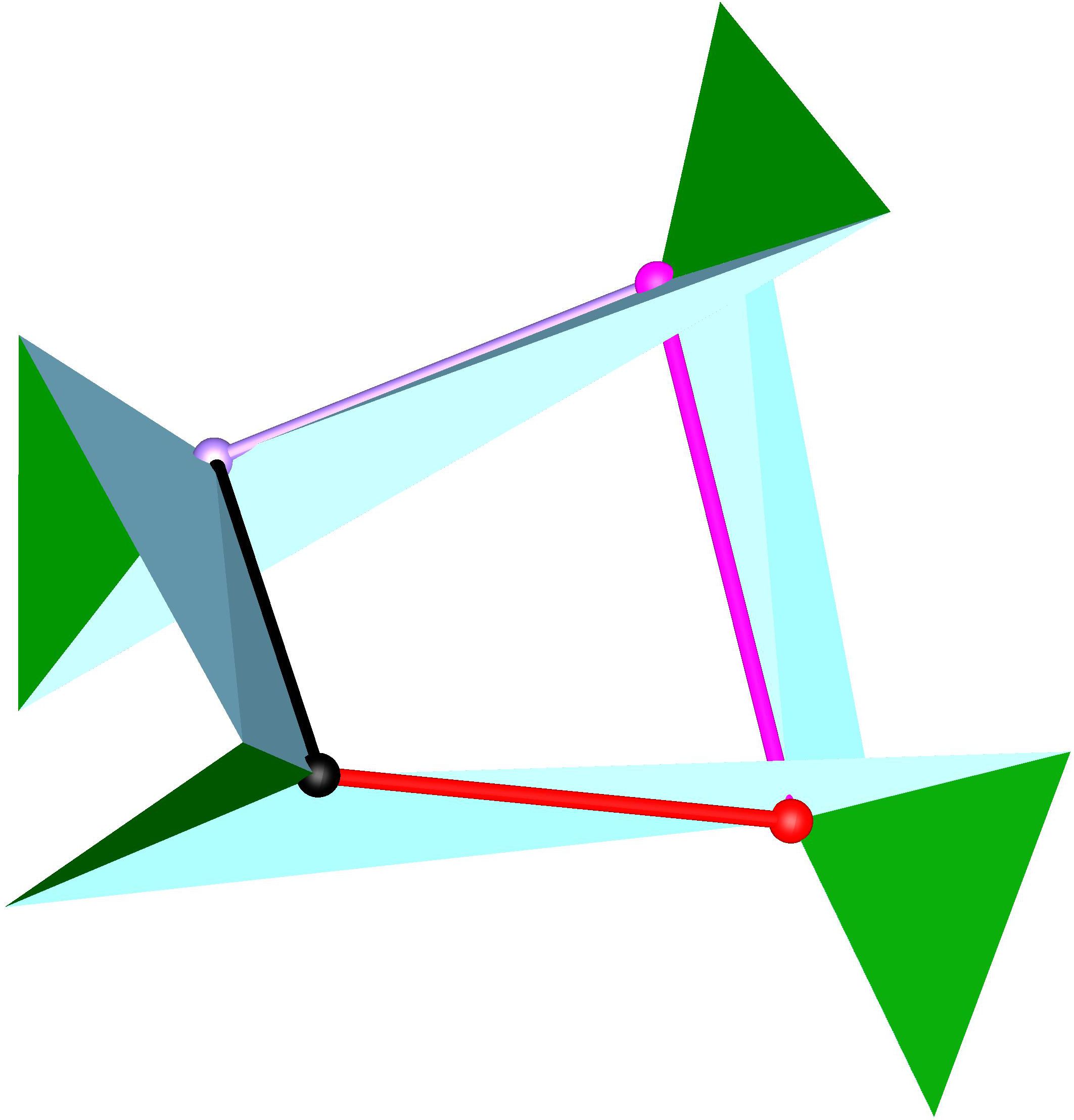}
\begin{scriptsize}
\put(0,0){a)}
\end{scriptsize}     
  \end{overpic} 
\hfill
\begin{overpic}
    [height=45mm]{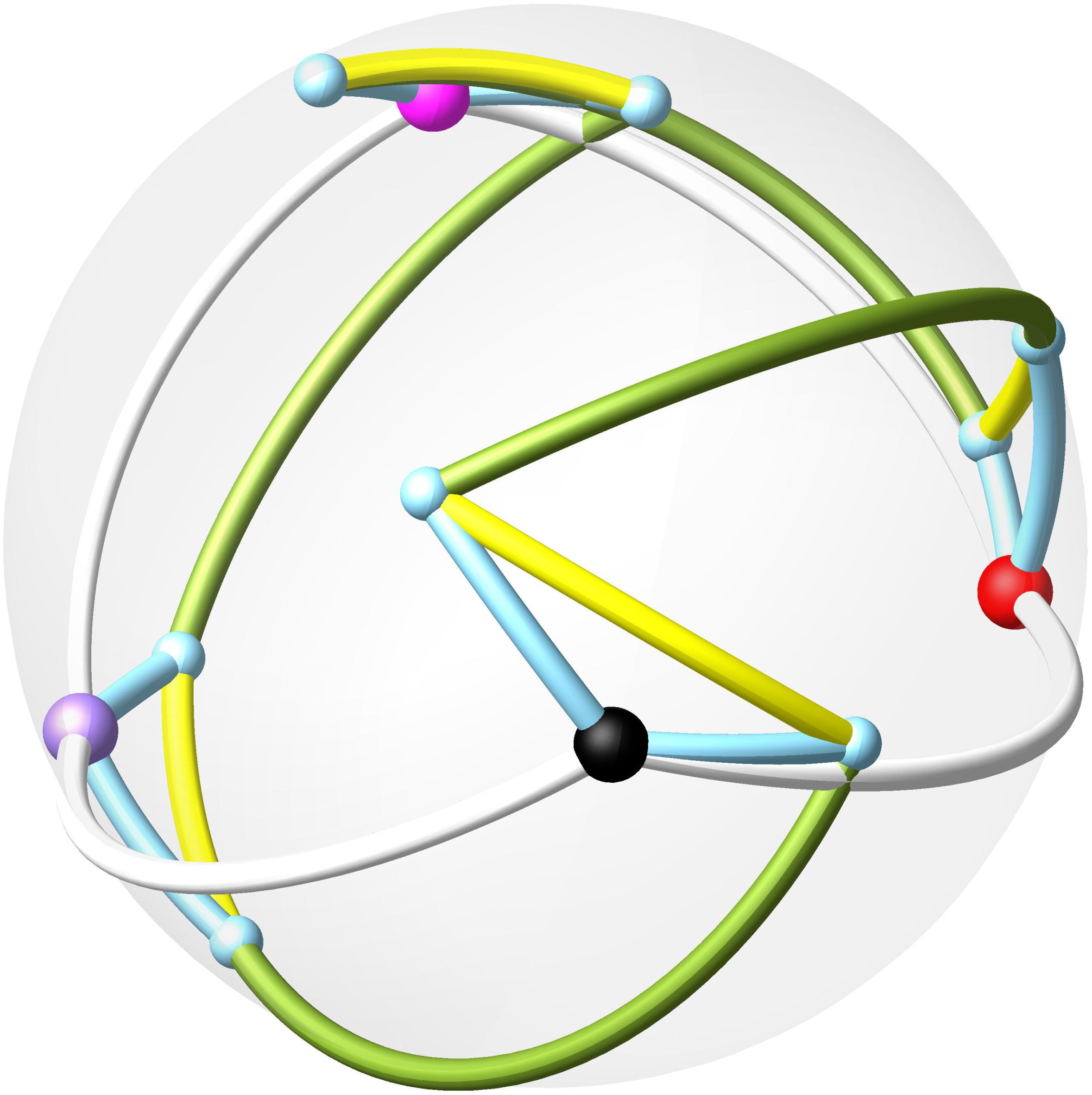}
\begin{scriptsize}
\put(0,0){b)}
\end{scriptsize}     
  \end{overpic} 
	
\caption{A $(3\times 3)$ building block of a V-hedra with skew quads (a) and its spherical image (b). 
The vertex $\go V_i$ and the line-segment $\go V_{i-1}\go V_i$ have the same color, where $i=0$ corresponds to black, $i=1$ to red, $i=2$ to magenta and $i=3$ to purple. Note that the illustrations of Figs.\ \ref{fig4},  
\ref{fig5} and \ref{fig6} are rendered with respect to the same view. 
The animations of the mobility of the V-hedra and its spherical image are online available at 
\url{https://www.dmg.tuwien.ac.at/nawratil/skew_quad_spatial.gif} and 
\url{https://www.dmg.tuwien.ac.at/nawratil/skew_quad_spherical.gif}, respectively. 
}
  \label{fig4}
\end{figure}

\begin{figure}[h!]
\begin{center}
\begin{overpic}
    [height=55mm]{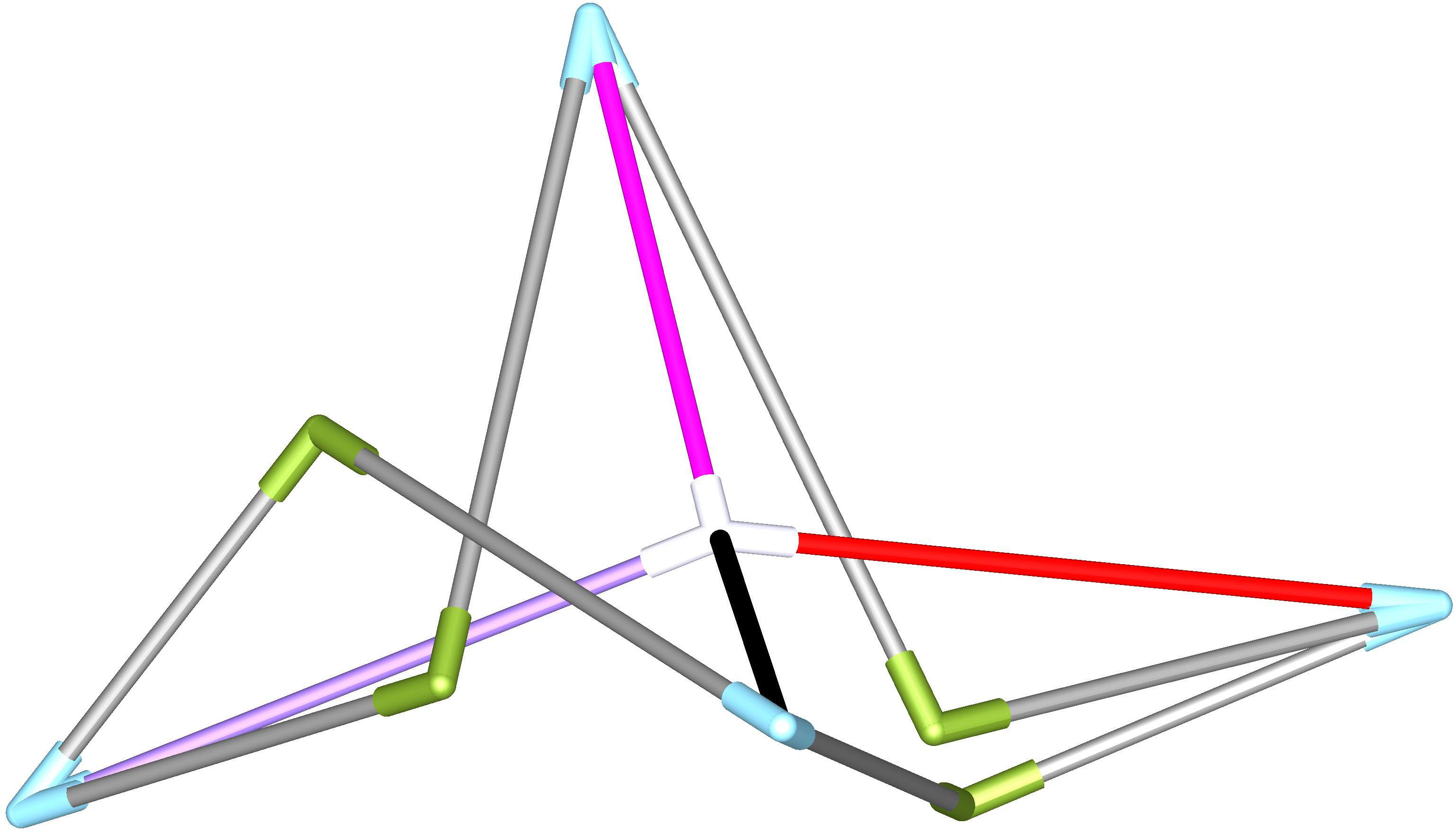}
\end{overpic} 
\end{center}
\caption{The overconstrained mechanism which results from the reciprocal-parallelism to the 
$(3\times 3)$ building block of a V-hedra with skew quads illustrated in Fig. \ref{fig4}a. 
The animation of the mobility of this mechanism is online available at 
\url{https://www.dmg.tuwien.ac.at/nawratil/skew_quad_reciprocal.gif}, where we fixed  
the length of the black edge, which is parallel to $\go V_3\go V_0$.
}
  \label{fig5}
\end{figure}


\newpage

\begin{figure}[t]
\begin{center}
\begin{overpic}
    [width=45mm]{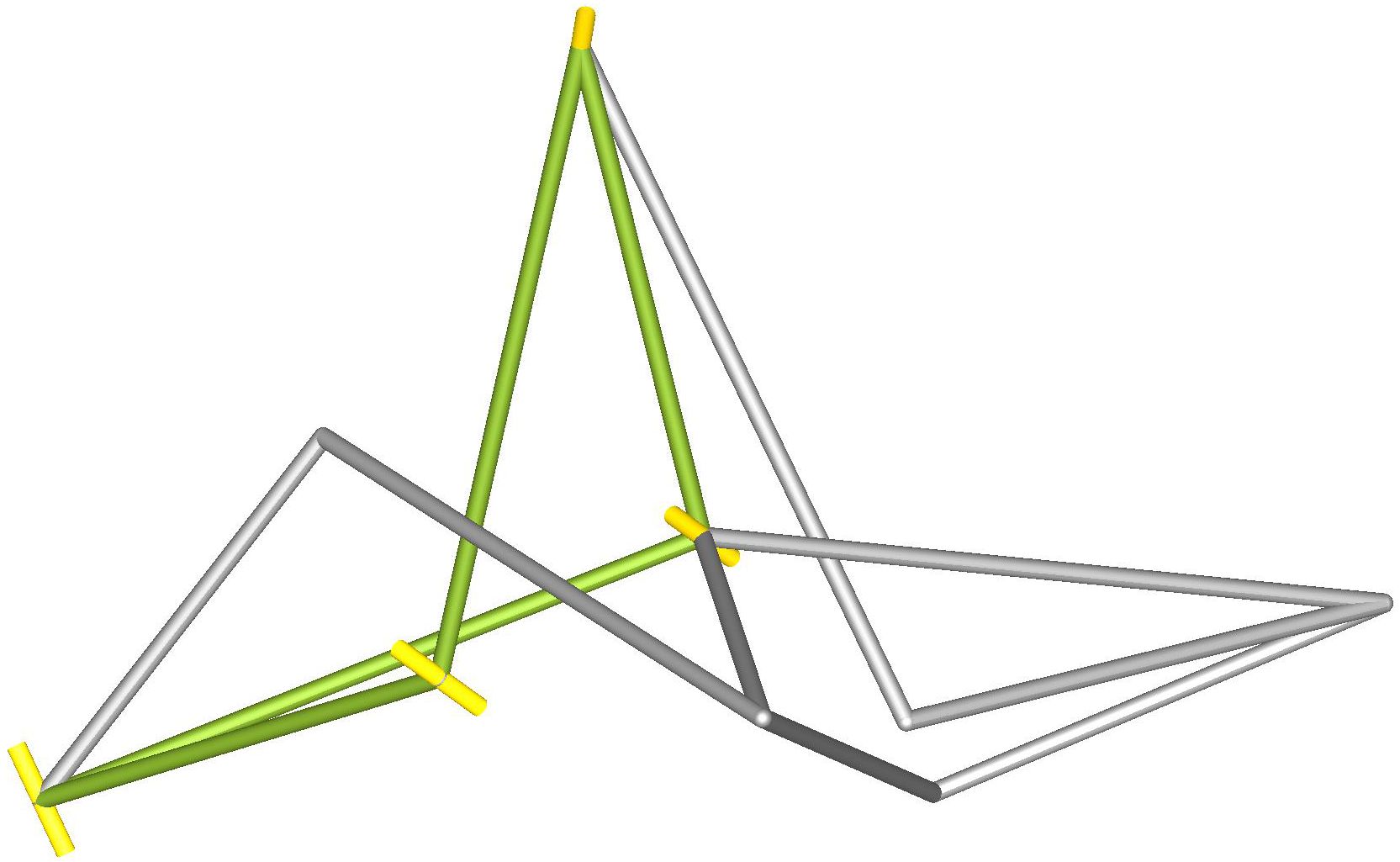}
  \end{overpic} 
\qquad
\begin{overpic}
    [width=45mm]{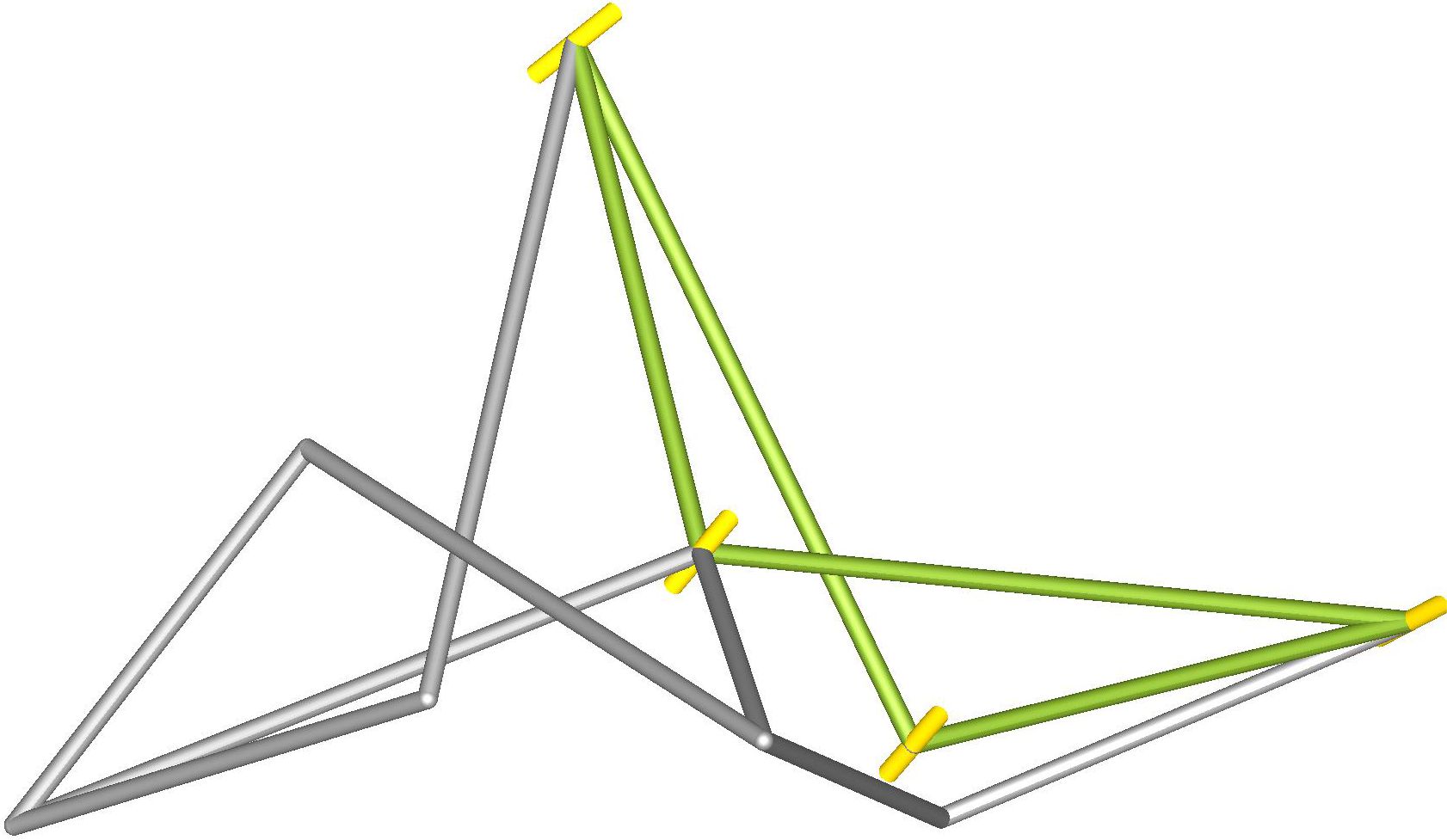}
  \end{overpic} 
\newline
\begin{overpic}
    [width=45mm]{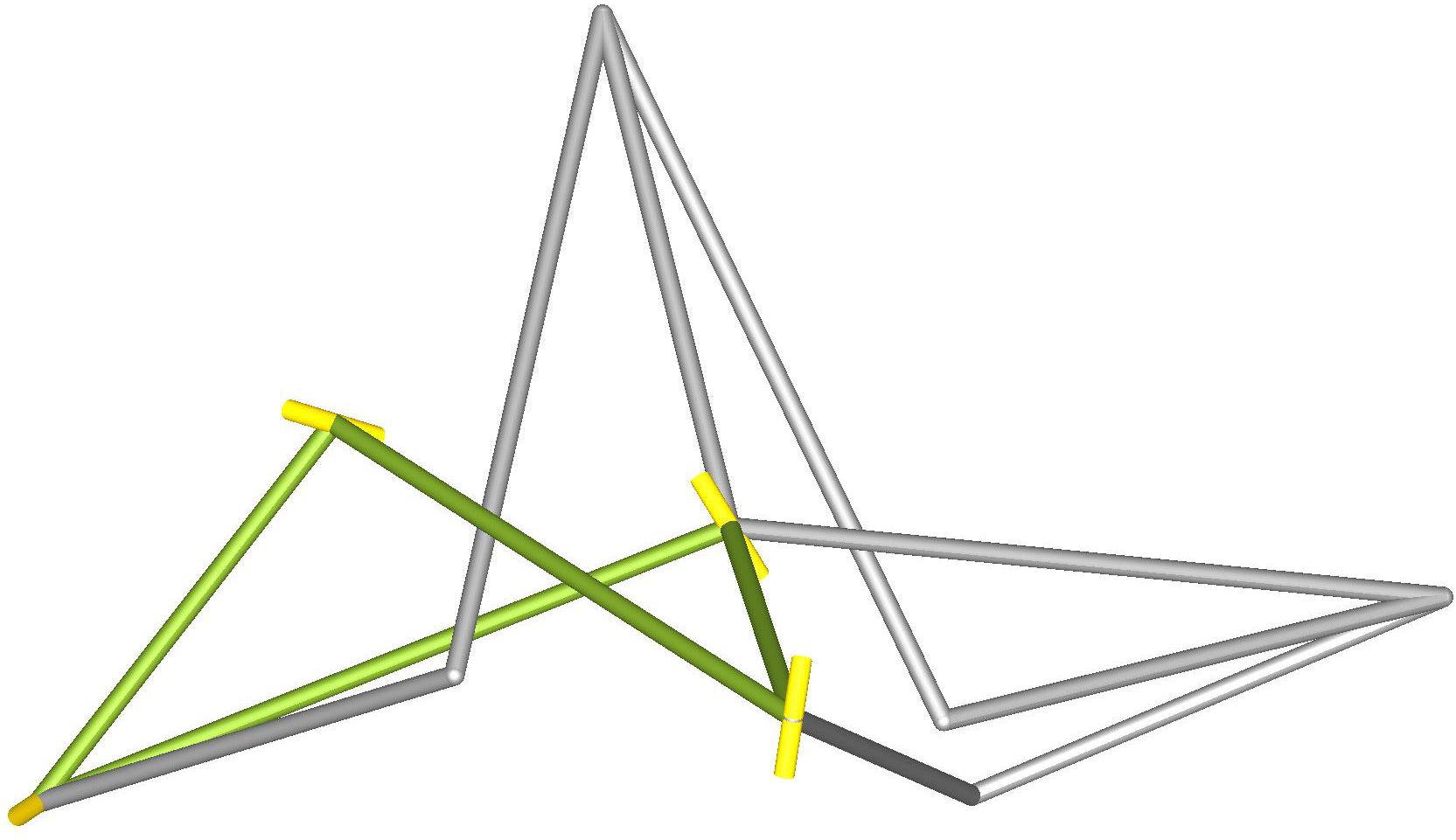}
  \end{overpic} 
\qquad
\begin{overpic}
    [width=45mm]{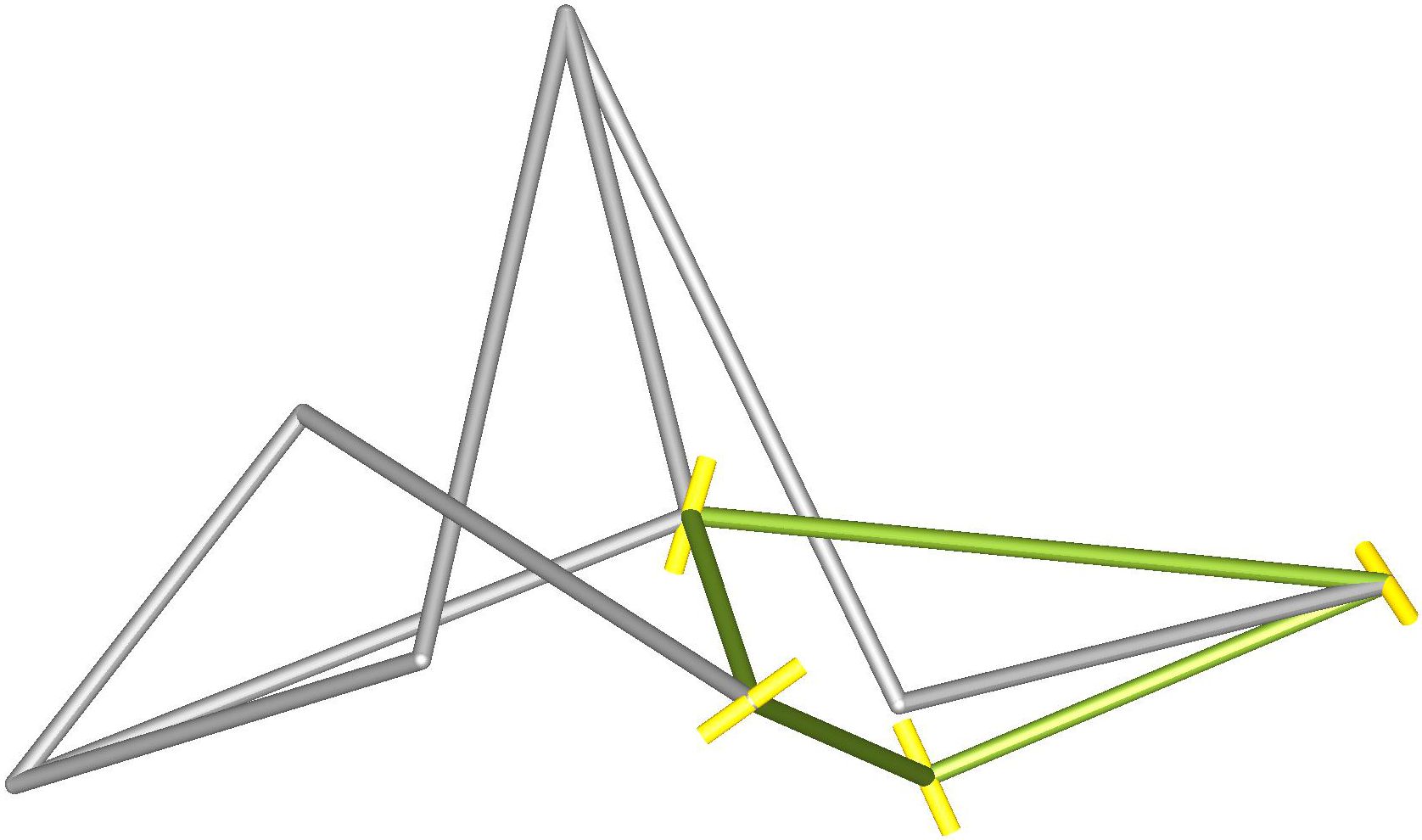}
  \end{overpic} 
\end{center}
\caption{The four Bennett mechanisms associated with the structure illustrated in Fig.\ \ref{fig5}, 
where the rotation axes are displayed in yellow.
}
  \label{fig6}
\end{figure}    

\section{Conclusions, open problems and future research}\label{sec:conclusion}

We generalized continuous flexible Kokotsakis belts of the isogonal type by allowing that 
the faces, which are  adjacent to the line-segments of the rigid closed polygon $\go p$, to be skew. 
In more detail we studied the case where $\go p$ is a skew quad as it corresponds to a 
$(3\times 3)$ building block of a V-hedra composed of skew quads, which proves (under consideration 
of Theorem \ref{th:schief}) the existence of continuous flexible SQ surfaces. 

Open questions in this context regard the smooth analog of continuous flexible Kokotsakis 
belts of the isogonal type and of V-hedra with skew quads.

This study at hand is also the starting point towards a full classification of continuous flexible 
$(3\times 3)$ SQ building blocks, which is subject to future research.

\subsubsection{Acknowledgements.} The research is supported by grant  F77 (SFB ``Advanced Computational Design'', subproject SP7) of the Austrian Science Fund FWF.

%
%
%

\begin{thebibliography}{99}

\bibitem{kokotsakis}
Kokotsakis, A.: \"Uber bewegliche Polyeder. Mathematische Annalen \textbf{107}:627--647 (1932)

\bibitem{karpenkov}
Karpenkov, O.N.: On the flexibility of Kokotsakis meshes.
Geometriae Didicata \textbf{147}:15--28 (2010)

\bibitem{voss}
Voss, A.: \"Uber diejenigen Fl\"achen, auf denen geod\"atische Linien ein konjugiertes System bilden. 
M\"unchner Berichte, pp. 95--102 (1888)

\bibitem{schief}
Schief, W.K., Bobenko, A.I., Hoffmann, T.: On the integrability of infinitesimal and finite deformations of polyhedral surfaces. 
Discrete Differential Geometry (A.I. Bobenko et al.\ eds.), Oberwolfach Seminars \textbf{38}:67--93 (2008)

\bibitem{evans}
Evans, T.A., Lang, R.J., Magleby, S.P., Howell, L.L.: Rigidly foldable origami twists. 
Origami$^6$, I: Mathematics (K.\ Miura et al.\ eds.), pp.\ 119--130, American Mathematical Society (2015)

\bibitem{stachel}
Stachel, H.: A kinematic approach to Kokotsakis meshes. Computer Aided Geometric Design \textbf{27}:428--437 (2010)

\bibitem{bricard}
Bricard, R.: M\'emoire sur la th\'eorie de l'octa\`edre articul\'e. Journal de Math\'ematiques pures
et appliqu\'ees, Liouville \textbf{3}:113--148 (1897)

\bibitem{stachel_t3}
Stachel, H.: Remarks on Bricard's flexible octahedra of type 3. 
Proceedings of the 10th International Conference on Geometry and Graphics (July 28 - Aug. 3, 2002, Kiev/Ukraine),  \textbf{1}:8--12 (2002) 

\bibitem{naw_inf}
Nawratil, G.: Flexible octahedra in the projective extension of the Euclidean 3-space. 
Journal for Geometry and Graphics \textbf{14}(2):147--169 (2010)

\bibitem{NS10}	
Nawratil, G., Stachel, H.: Composition of spherical four-bar-mechanisms. 
New Trends in Mechanisms Science -- Analysis \& Design (D. Pisla et al.\ eds.), pp.\ 99--106, Springer (2010) 

\bibitem{Naw11}	
Nawratil, G.: Reducible compositions of spherical four-bar linkages with a spherical coupler component. 
Mechanism and Machine Theory \textbf{46}(5):725--742 (2011)

\bibitem{Naw12}	
Nawratil, G.: Reducible compositions of spherical four-bar linkages without a spherical coupler component. 
Mechanism and Machine Theory \textbf{49}:87--103 (2012)

\bibitem{Izm17}	
Izmestiev, I.: Classification of flexible Kokotsakis polyhedra with quadrangular base. 
International Mathematics Research Notices \textbf{2017}(3):715--808 (2017) 

\bibitem{graf}
Sauer, R., Graf, H.: \"Uber Fl\"achenverbiegung in Analogie zur Verknickung offener Facettenflache. 
Mathematische Annalen \textbf{105}:499--535 (1931)

\bibitem{sauer}
Sauer, R.: Differenzengeometrie. Springer (1970)

\bibitem{kiumars}
Sharifmoghaddam, K., Nawratil, G., Rasoulzadeh, A., Tervooren, J.: Using Flexible Trapezoidal Quad-Surfaces for Transformable Design. 
Proceedings of the IASS Annual Symposium 2020/21 and the 7th International Conference on Spatial Structures, IASS (in press) 

\bibitem{montagne}
Montagne, N., Douthe, C., Tellier, X., Fivet, C., Baverel, O.: Voss surfaces: a design space for geodesic gridshells. 
Journal of the International Association for Shell and Spatial Structures \textbf{61}(206):255--263 (2020)

\bibitem{Tac09}
Tachi, T.: Generalization of rigid foldable quadrilateral mesh origami. 
Journal of the International Association for Shell and Spatial Structures \textbf{50}(3):173--179 (2009)

\bibitem{Tac10}	
Tachi, T.: Freeform Rigid-Foldable Structure using Bidirectionally Flat-Foldable Planar Quadrilateral Mesh. 
Advances in Architectural Geometry (C. Ceccato et al.\ eds.), pp.\ 87--102, Springer (2010) 

\bibitem{feng}
Feng, F., Dang, X., James, R.D., Plucinsky, P.:
The designs and deformations of rigidly and flat-foldable quadrilateral mesh origami.
Journal of the Mechanics and Physics of Solids \textbf{142}:104018 (2020)

\bibitem{feng2022}
Dang, X., Feng, F., Plucinsky, P., James, R.D., Duan, H., Wang, J.:
Inverse design of deployable origami structures that approximate a general surface. 
International Journal of Solids and Structures \textbf{234-235}:111224 (2022)

\bibitem{jiang}
Jiang, C., Wang, H., Inza, V.C., Dellinger, F., Rist, F., Wallner, J., Pottmann, H.:
Using isometries for computational design and fabrication.
ACM Transactions on Graphics \textbf{40}(4):42 (2021)

\bibitem{zijia}
Li, Z., Schicho, J.: Classification of angle-symmetric 6R linkages.
Mechanism and Machine Theory \textbf{70}:372--379 (2013)

\bibitem{mitchell}
Mitchell, T., Mazurek, A., Hartz, C., Miki, M., Baker, W.:
Structural Applications of the Graphic Statics and Static-Kinematic Dualities: 
Rigid Origami, Self-Centering Cable Nets, and Linkage Meshes. 
Proceedings of the IASS Annual Symposium (July 16-20, 2018, Boston/USA), Symposium: Graphic statics, pp.\ 1-8(8) 
(2018)

\bibitem{bennett}
Bennett, G.T.: The skew isogram mechanism. 
Proceedings of the London Mathematical Society \textbf{s2-13}(1):151--173 (1914)

\bibitem{wunderlich}
Wunderlich, W.: Zur Differenzengeometrie der Fl\"achen konstanter negativer Kr\"ummung. 
Sitzungsbericht der \"Osterreichischen Akademie der Wissenschaften,  Mathem.-Naturw. Klasse IIa \textbf{160}(1-5):39--77 (1951)

\end{thebibliography}
%

\end{document}